\newcommand{\keywords}[1]{\par\addvspace\baselineskip
\noindent\keywordname\enspace\ignorespaces#1}
\newcommand{\frechet}{Fr\'echet}
\newcommand{\df}{d_F}
\newcommand{\dfd}{d_{dF}}
\newcommand{\dfg}{d_{FG}}
\newcommand{\dfr}{d_{FR}}
\date{}
\newcommand{\old}[1]{{{}}}
\def\marrow{\marginpar[\hfill$\longrightarrow$]{$\longleftarrow$}}
\def\todo#1{\textsc{(TODO: \marrow\textsf{#1})}}
\begin{document}

\mainmatter  

\title{The Discrete \frechet\ Gap
	}

\titlerunning{The Discrete \frechet\ Gap}

\author{Omrit Filtser
	\and 	Matthew J. Katz}
\authorrunning{The Discrete \frechet\ Gap}

\institute{
	Ben-Gurion University of the Negev\\
	Beer-Sheva 84105, Israel\\
	\email{\{omritna,matya\}@cs.bgu.ac.il}
}

\maketitle
\begin{abstract}
We introduce the \emph{discrete \frechet\ gap} and its variants as an alternative measure of similarity between polygonal curves. We believe that for some applications the new measure (and its variants) may better reflect our intuitive notion of similarity than the discrete \frechet\ distance (and its variants), since the latter measure is indifferent to (matched) pairs of points that are relatively close to each other. Referring to the frogs analogy by which the discrete \frechet\ distance is often described, the discrete \frechet\ gap is the minimum difference between the longest and shortest positions of the leash needed for the frogs to traverse their point sequences.

We present an optimization scheme, which is suitable for any monotone function defined for pairs of distances such as the gap and ratio functions. We apply this scheme to two variants of the discrete \frechet\ gap, namely, the \emph{one-sided discrete \frechet\ gap with shortcuts} and the \emph{weak discrete \frechet\ gap}, to obtain $O(n^2 \log^2 n)$-time  algorithms for computing them.
\end{abstract}

\old{
\begin{abstract}
	The discrete \frechet\ distance is a well-known similarity measure for two curves. Although it is very useful in many applications, it still has a few drawbacks. For example, since the \frechet\ distance is a bottleneck (min-max) measure, it is very sensitive to outliers. We suggest a new variant of the discrete \frechet\ distance, namely, the discrete \frechet\ gap. We believe that in some applications, the \frechet\ gap variants may improve the compliance between the measured results and the resemblance in reality.
	\keywords{\frechet\ distance, curve similarity}
\end{abstract}
}

\section{Introduction}

\old{
	The \frechet\ distance is generally described as follows: Consider a
	person and a dog connected by a leash, each walking along a curve
	from its starting point to its end point. Both can control
	their speed but they are not allowed to backtrack. The \frechet\ distance between two curves $A$ and $B$, denoted by $\df(A,B)$,
	is the minimum length of a leash that is sufficient
	for traversing both curves in this manner.
	
	Intuitively, the \emph{discrete \frechet\ distance}
	replaces the curves by two sequences of points $A=(a_{1},...,a_{n})$
	and $B=(b_{1},...,b_{n})$, and replaces the person and the dog by
	two frogs, the $A$-frog and the $B$-frog, initially placed at $a_{1}$
	and $b_{1}$, respectively. At each move, the $A$-frog or the $B$-frog
	(or both) jumps from its current point to the next one. The frogs are
	not allowed to backtrack. We are interested in the minimum length
	of a leash that connects the frogs and allows the $A$-frog and the $B$-frog to reach $a_{n}$ and $b_{n}$, respectively. More formally, for a given length $\delta$
	of the leash, a jump is allowed only if the distances between the
	two frogs before and after the jump are both at most $\delta$; the
	\emph{discrete \frechet\ distance} between A and B, denoted by $\dfd(A,B)$,
	is then the smallest $\delta>0$ for which there exists a sequence
	of jumps that brings the frogs to $a_{n}$ and $b_{n}$, respectively.
	}

Polygonal curves play an important role in many applied areas, such as 3D modeling in computer vision, map matching in GIS, and protein backbone structural alignment and comparison in computational biology.
Given two curves in a metric space, it is a challenging task to compare them in a way that will reflect our intuitive notion of resemblance.
Various similarity measures have been suggested and investigated, each of them has its advantages and disadvantages. The \frechet\ distance is a useful and well studied similarity measure that has been applied in many fields of research and applications.

The \emph{\frechet\ distance} is often described by an analogy of a man and a dog connected by a leash, each walking along a curve from its starting point to its end point. Both the man and the dog can control their speed but they are not allowed to backtrack. The \frechet\ distance between the two curves is the minimum length of a leash that is sufficient for traversing both curves in this manner.

Intuitively, the \emph{discrete \frechet\ distance} replaces the curves by two sequences of points $A=(a_{1},...,a_{n})$ and $B=(b_{1},...,b_{n})$, and replaces the man and dog by two frogs (connected by a leash), the $A$-frog and the $B$-frog, initially placed at $a_{1}$ and $b_{1}$, respectively. At each move, the $A$-frog or the $B$-frog (or both) jumps from its current point to the next one. The frogs are not allowed to backtrack. We are interested in the minimum length of a leash that allows the $A$-frog and the $B$-frog to reach $a_{n}$ and $b_{n}$, respectively.
The discrete \frechet\ distance is considered a good approximation of the continuous distance, and is easier to compute.

Much research has been done on the \frechet\ distance, the majority of which considers only the continuous version. However, sometimes the discrete \frechet\ distance is more appropriate. For example, in computational biology where each vertex of the polygonal curves represents an alpha-carbon atom. Applying the continuous \frechet\ distance in this case will result in mapping of arbitrary points (i.e., interior points on the edges of the curves), which is not meaningful biologically. See, e.g.,~\cite{AvrahamFKKS14} for a collection of references on the \frechet\ distance and its applications.

In many of the application domains using the \frechet\ distance, the curves or the sampled sequences of points are generated by physical sensors, such as GPS devices. These sensors may generate inaccurate measurements, which we refer to as {\em outliers}.  
Several variants of the \frechet\ distance exist for measuring similarity between curves that might be partially erroneous. In particular, variants for handling outliers have been proposed, since the \frechet\ distance is a bottleneck (min-max) measure and is very sensitive to outliers. Among these variants are the \frechet\ distance with shortcuts~\cite{AvrahamFKKS14,BuchinDS14,DriemelH13}, the partial \frechet\ similarity~\cite{BuchinBW09}, and the average and summed \frechet\ distance~\cite{BrakatsoulasPSW05,EfratFV07}.

In the \emph{one-sided discrete \frechet\ distance with shortcuts}, we allow the $A$-frog to jump to any point that comes later in its sequence, rather than just to the next point. The $B$ frog has to visit all the $B$ points in order,
as in the standard discrete \frechet\ distance problem.

We suggest a new variant of the discrete \frechet\ distance --- the \emph{discrete \frechet\ gap}.
Returning to the frogs analogy, in the discrete \frechet\ gap the leash is elastic and its length is determined by the distance between the frogs. When the frogs are at the same location, the length of the leash is zero.
The rules governing the jumps are the same, i.e., traverse all the points in order, no backtracking.
We are interested in the minimum \emph{gap} of the leash, i.e., the minimum difference between the longest and shortest positions of the leash needed for the frogs to jump from their start points to their end points.

While the discrete \frechet\ distance is determined by the (matched) pairs of points that are
very far from each other and is indifferent towards (matched) pairs of points that are very close to each other, the discrete \frechet\ gap measure 
is sensitive to both. In some cases (though not always), this sensitivity results in better reflection of reality; see Figure~\ref{fig:gap} for examples.

\old{
While the discrete \frechet\ distance is only sensitive to pairs of points that are
very far from each other, thus it does not behave well in the presence of outliers. The \frechet\ gap measure
is also sensitive to pairs of points that are very close to each other. In some cases (but not always), this sensitivity results in better reflection of reality; see Figure \ref{fig:gap} for examples.
}

\begin{figure*}[!h]
	\begin{center}
		\subfloat[] {\includegraphics[scale=0.7]{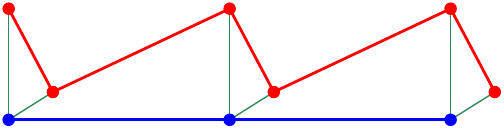}} \hspace{0.4cm}
		\subfloat[]{\includegraphics[scale=0.7]{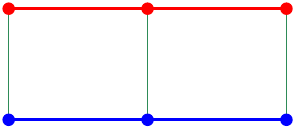}} \hspace{0.4cm}
		\subfloat[]{\includegraphics[scale=0.7]{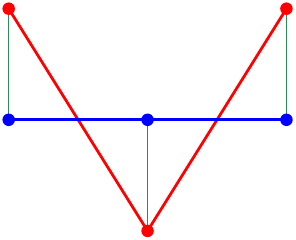}}
	\end{center}
	\vspace{-10pt}
	\caption{\small \frechet\ gap vs. \frechet\ distance: (a) Two non-similar curves, with a large gap. (b) Two similar looking curves. The gap is very small while the \frechet\ distance remains the same as in (a). (c) Two non-similar curves, with a small gap and a large \frechet\ distance.}
	\label{fig:gap}
	\vspace{-10pt}
\end{figure*}

For handling outliers, 
we suggest the \emph{one-sided discrete \frechet\ gap with shortcuts} variant, which we believe has several advantages.
Comparing to the one-sided discrete \frechet\ distance with shortcuts, we believe that the gap variant better reflects the intuitive notion of resemblance between curves in the presence of outliers. Figure~\ref{fig:gap_shortcuts} depicts two curves that look similar, except for a single outlier, with small \frechet\ gap with shortcuts and large \frechet\ distance with shortcuts. Also notice that the gap variant gives a better matching of the points.

\begin{figure*}[!h]
	\centering
	\begin{tabular}{ccc}
		\includegraphics[scale=0.7]{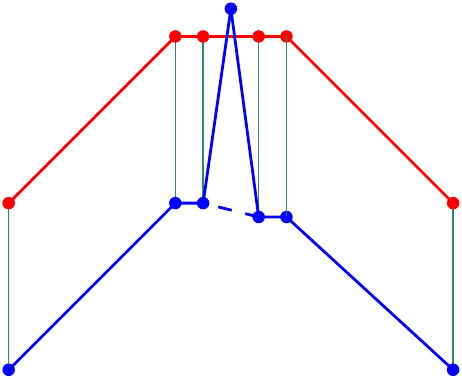} & \hspace{0.5cm}
		\includegraphics[scale=0.7]{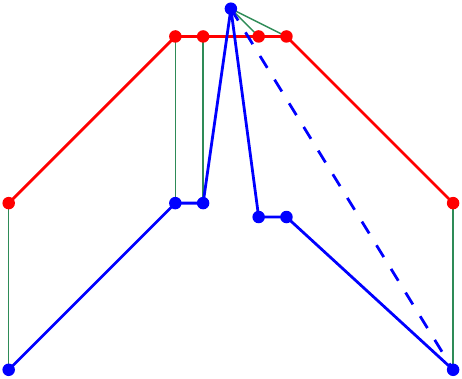} \\
		(a) & (b)
	\end{tabular}
	\centering
	\vspace{-5pt}
	\caption{\small
		(a) The 1-sided \frechet\ gap with shortcuts is small and the outlier is ignored.
		(b) The 1-sided \frechet\ distance with shortcuts is large and the outlier is matched.}
	\label{fig:gap_shortcuts}
\end{figure*}

Avraham et al.~\cite{AvrahamFKKS14} showed that the decision version of the one-sided discrete \frechet\ distance with shortcuts can be solved in linear time, using a greedy algorithm. This algorithm can also be used for solving the decision version of the gap variant. In this paper, we present an efficient optimization algorithm for computing the one-sided discrete \frechet\ gap with shortcuts, which exploits the greediness of the decision algorithm.

Other variants of the discrete \frechet\ distance have corresponding meaningful gap variants. For example, the \emph{weak discrete \frechet\ distance} in which the frogs are allowed to jump also backwards to the previous point in their sequence. The decision version for this variant can also be solved with a greedy algorithm (in quadratic time), and we show how to apply our optimization scheme to efficiently compute the \emph{weak discrete \frechet\ gap}. In general, our scheme can be applied to any variant of the discrete \frechet\ gap that has an efficient greedy decision algorithm.

Notice that the number of potential gaps is $O(n^4)$, while the number of potential distances is only $O(n^2)$. Nevertheless, our algorithms for computing the the one-sided discrete \frechet\ gap with shortcuts and the weak discrete \frechet\ gap run in $O(n^2 \log^2 n)$ time.

Finally, our scheme can be used for computing the \emph{discrete \frechet\ ratio} (and its variants), in which we are interested in the minimum ratio between the longest and the shortest positions of the leash. More generally, one can replace the gap function with any other function $g$ defined for pairs of distances, provided that it is monotone, i.e., for any four distances $c \le a \le b \le d$, it holds that $g(a,b) \le g(c,d)$.

\section{Preliminaries}
\label{sec:prelim}

Let $A=(a_1,\ldots,a_n)$ and $B=(b_1,\ldots,b_n)$ be two sequences of points.
We define a directed graph $G=G(A\times B, E=E_A\cup E_B\cup E_{AB})$, whose vertices are all the possible positions of the two frogs, and whose edges are all the possible moves between positions:
$E_A= \left\{ \Bigr((a_{i},b_{j}),(a_{i+1},b_{j})\Bigl)\right\}$, $E_B= \left\{ \Bigr((a_{i},b_{j}),(a_{i},b_{j+1})\Bigl)\right\}$, $E_{AB}= \left\{ \Bigr((a_{i},b_{j}),(a_{i+1},b_{j+1})\Bigl)\right\}$.

The set $E_A$ corresponds to moves where only the $A$-frog jumps forward, the set $E_B$ corresponds to moves where only the $B$-frog jumps forward, and the set $E_{AB}$ corresponds to moves where both frogs jump forward.
Notice that any valid sequence of moves of the two frogs (with unlimited leash length) corresponds to a path in $G$ from $(a_1,b_1)$ to $(a_n,b_n)$, and vice versa.

It is likely that not all positions in $A \times B$ are valid; for example, when the leash is short. We thus assume that we are given an indicator function $f: A \times B \rightarrow \{0,1\}$, which determines for each position whether it is valid or not.
Now, we say that a position $(a_i,b_j)$ is a \emph{reachable position} (w.r.t. $f$),
if there exists a path $P$ in $G$ from $(a_1,b_1)$ to $(a_i,b_j)$, consisting of only valid positions, i.e.,
for each position $(a_k,b_l) \in P$, it holds that $f(a_k,b_l)=1$.

For any distance $\delta \ge 0$, the function $f_\delta$ is defined as follows:
\vspace{-8pt}
\[
f_\delta(a_i,b_j)=
\begin{cases}
1, & d(a_i,b_j) \le \delta\\
0, & \mbox{otherwise}
\end{cases}\ ,\vspace{-8pt}
\]
where $d(a_i,b_j)$ denotes the Euclidean distance between $a_i$ and $b_j$.
The \emph{discrete \frechet\ distance} $\dfd(A,B)$ is the smallest $\delta \ge 0$
for which $(a_n,b_n)$ is a reachable position w.r.t. $f_\delta$.

For any range of distances $[s,t]$, $0 \le s \le t$, the function $f_{[s,t]}$ is defined as follows:
\vspace{-5pt}
\[
f_{[s,t]}(a_i,b_j)=
\begin{cases}
1, & s \le d(a_i,b_j) \le t\\
0, & \mbox{otherwise}
\end{cases}\ . \vspace{-2pt}
\]
A range $[s,t]$, $s \le t$, is a \emph{feasible (\frechet) range} if $(a_n,b_n)$ is a reachable position w.r.t. $f_{[s,t]}$.

Let $g$ be a bivariate real function with the following monotonicity property:
for any four non-negative real numbers $c \le a \le b \le d$, it holds that $g(a,b) \le g(c,d)$. Then,  
the \emph{discrete \frechet\ range w.r.t. $g$} is the smallest $\delta=g(s,t)$ for which $[s,t]$ is a feasible range.
Two especially important instances are 
the discrete \frechet\ range w.r.t. $t-s$, which we name the \emph{discrete \frechet\ gap} and denote by $\dfg(A,B)$, and
the discrete \frechet\ range w.r.t. $t/s$, which we name the \emph{discrete \frechet\ ratio} and denote by $\dfr(A,B)$.

In the following sections, we use the difference function $t-s$ as a representative function through which we present our ideas and results. However, these results are valid for any such function $g$ (and in particular for the function $t/s$).   

\vspace{-5pt}
\paragraph{One-sided shortcuts variants.}
Let $f$ be an indicator function.
We say that a position $(a_i,b_j)$ is an \emph{s-reachable position} (w.r.t. $f$), if there exists a path $P$ in $G$ from $(a_1,b_1)$ to $(a_i,b_j)$, such that $f(a_1,b_1)=1$, $f(a_i,b_j)=1$, and for each $b_l$, $1 < l < j$, there exists a position $(a_k,b_l) \in P$ that is valid (i.e., $f(a_k,b_l)=1$).
We call such a path an \emph{s-path}. In general, an s-path consists of both valid and non-valid positions.
Consider the path $P'$ (not in $G$) that is obtained from $P$ by deleting the non-valid positions.
Then $P'$ corresponds to a sequence of moves of the two frogs, where the A-frog is allowed to skip points in $A$, and with a leash satisfying $f$. Since in any path in $G$ the two indices (of the A-points and of the B-points) are monotonically non-decreasing, it follows that in $P'$ the B-frog visits each of the points $b_1, \ldots, b_j$, in order, while the A-frog visits only a subset of the points $a_1,\ldots,a_i$, in order.

The \emph{discrete \frechet\ distance with one-sided shortcuts}
$\dfd^S(A,B)$ is the smallest $\delta \ge 0$
for which $(a_n,b_n)$ is an s-reachable position w.r.t. $f_\delta$.

Similarly, a range $[s,t]$ is a \emph{feasible (\frechet) range with one-sided shortcuts} if $(a_n,b_n)$ is an s-reachable position w.r.t. $f_{[s,t]}$, and
the \emph{discrete \frechet\ gap with one-sided shortcuts} $\dfg^S(A,B)$ is the smallest $\delta=t-s \ge 0$ for which $[s,t]$ is a feasible range with one-sided shortcuts.

\vspace{-5pt}
\paragraph{Weak variants.}
Let $G'=G(A\times B, E')$ be the graph obtained from $G$ by adding all backward edges to $E$, i.e., $E'= E \cup \{(v,u)|(u,v)\in E\}$. We say that a position $(a_i,b_j)$ is a \emph{w-reachable position} (w.r.t. $f$), if there exists a path $P$ in $G'$ from $(a_1,b_1)$ to $(a_i,b_j)$ consisting of only valid positions. Such a path corresponds to a sequence of moves of the two frogs, with a leash satisfying $f$, and when backtracking is allowed.

The \emph{weak discrete \frechet\ distance} $\dfd^w(A,B)$ is the smallest $\delta \ge 0$
for which $(a_n,b_n)$ is a w-reachable position w.r.t. $f_\delta$.

Similarly, a range $[s,t]$ is a \emph{feasible weak (\frechet) range} if $(a_n,b_n)$ is a w-reachable position w.r.t. $f_{[s,t]}$, and the \emph{weak discrete \frechet\ gap} $\dfg^w(A,B)$ is the smallest $\delta=t-s \ge 0$ for which $[s,t]$ is a feasible weak range.

\section{Computing the discrete \frechet\ gap} \label{sec:DFG}
Observe that for any feasible (\frechet) range $[s,t]$, it holds that $t \ge \dfd(A,B)$ and $s \le\min\{d(a_1,b_1),d(a_n,b_n)\}$.
Moreover, since we are interested in the minimum feasible range, we may restrict our attention to ranges whose limits are distances between points of $A$ and points of $B$. (Otherwise, we can increase the lower limit and decrease the upper limit until they become such ranges.) Thus, we can search for the minimum feasible range using the following sorted array of distances:
\resizebox{\textwidth}{!}{
$
D=(d^{\min}_m,\dots,d^{\min}_1=\min\{d(a_1,b_1),d(a_n,b_n)\},\dots,\dfd(A,B)=d^{\max}_1,\dots,d^{\max}_k)
$}
\vspace{0.1cm}

\old{
\begin{figure}[h]
	\begin{center}
		\includegraphics[scale=0.8]{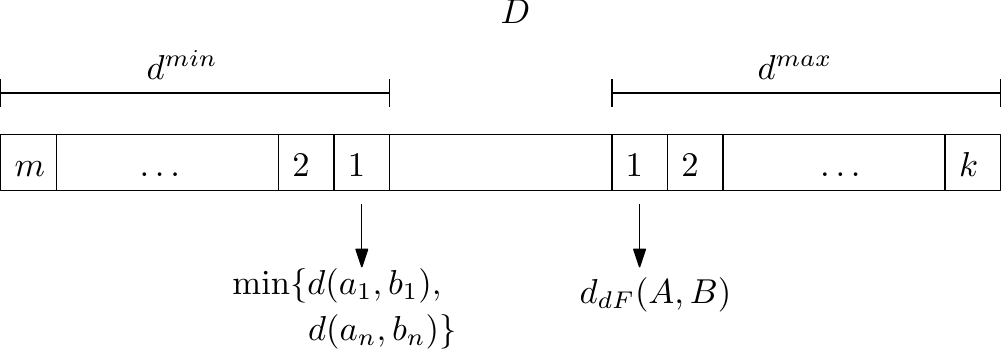}
	\end{center}
	\vspace{-10pt}
	\caption{\small The distances array $D$.}
	\label{fig:distances}
	\vspace{-10pt}
\end{figure}
}

One can compute the discrete \frechet\ gap $\dfg(A,B)$ in $O(n^4)$ time and $O(n^2)$ space, by using the standard $O(n^2)$ decision procedure. Start with the range $[s,t]=[d^{\min}_m,d^{\max}_1]$, and as long as the current range is not a feasible range increase $t$ (by moving to the next distance in $D$). Now, increase $s$ (by moving to the next distance in $D$) and, again, as long as the current range is not a feasible range increase $t$. Repeat the last step (i.e., increase $s$ by a single move and $t$ by a sequence of moves) as much as needed. Finally, return the minimum feasible range that was found during this process.   


A recent result of Ben-Avraham, Kaplan and Sharir~\cite{AvrahamKS15} enables us to reduce the running time to $O(n^3)$. 
Given sequences $A$ and $B$ and an indicator function $f$, they construct a dynamic data structure in $O(n^2)$ time (which also stores the information whether $(a_n,b_n)$ is a reachable position). Following a single change in $f$ (i.e., some valid position becomes non-valid or vice versa), the data structure can be updated in $O(n)$ time. 
Thus, after increasing $s$ or $t$ by moving to the next distance in $D$, we can determine in $O(n)$ time (instead of $O(n^2)$ time) whether $[s,t]$ is a feasible range, since by increasing $s$ a single position becomes non-valid and by increasing $t$ a single position becomes valid.

\section{Computing the discrete \frechet\ gap with one-sided shortcuts} \label{sec:DFGS}

We present an algorithm for computing the discrete \frechet\ gap with one-sided shortcuts in $O(n^2 \log^2n)$ time. Due to space limitations, the proofs of Lemmas~1-3 have been moved to Appendix~\ref{apx:missing_proofs}.

\vspace{-7pt}

\subsection{The decision procedure}

Let $[s,t]$, $0 \le s \le t$ be a range.
We would like to determine whether $[s,t]$ is a feasible (\frechet) range with one-sided shortcuts, i.e., whether $(a_n,b_n)$ is an s-reachable position w.r.t. $f_{[s,t]}$.
In this section, we present a linear-time algorithm for doing so, i.e., for solving the decision version.
Our algorithm is actually much more general and works for any indicator function $f$.
It is similar to the algorithm of Ben-Avraham et al.~\cite{AvrahamFKKS14} for the decision version of the discrete \frechet\ distance with one-sided shortcuts.

Let $f$ be an indicator function, such that $f(a_i,b_j)$ can be evaluated in constant time, for any position $(a_i,b_j)$.
Algorithm~\ref{alg:FD1S} 
computes an $s$-path in $G$ from $(a_1,b_1)$ to $(a_n,b_n)$, if such a path exists. In particular, it determines
whether $(a_n,b_n)$ is an s-reachable position w.r.t. $f$. 
Informally, the B-frog jumps forward (one point at a time) as long as possible, while the A-frog stays in place, then the A-frog makes the smallest forward jump needed to allow the B-frog to continue. The frogs continue advancing in this way, until they either reach $(a_n,b_n)$ or get stuck.

The running time of Algorithm~\ref{alg:FD1S} is clearly $O(n)$, since the number of iterations of the while loop is at most $2n-1$. Notice that we do not construct the graph $G=G(A \times B, E=E_A \cup E_B \cup E_{AB})$, but the path $P$ produced by the algorithm is a path in $G$. Actually, $P$ is a path in $G(A \times B, E=E_A \cup E_B)$, since the algorithm does not advance the frogs simultaneously. We will use this observation later. The correctness of the algorithm is given by the following lemma.

\vspace{-3pt}

\begin{restatable}{lemma}{dFSDecision}
\label{lem:dF1S-decision}
Given two sequences of points $A=(a_{1},\ldots,a_{n})$ and $B=(b_{1},\ldots,b_{n})$ and an indicator function $f$, dF1S-decision($A,B,f$) returns ``yes'' iff $(a_{n},b_{n})$ is an s-reachable position in $G=G(A \times B, E=E_A \cup E_B \cup E_{AB})$ w.r.t. f.
\end{restatable}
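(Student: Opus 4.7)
The plan is to prove the two directions of the biconditional separately, relying on a monotonicity lemma about optimal ``cover indices'' to handle the non-trivial direction. The easy direction (algorithm returns ``yes'' implies $(a_n,b_n)$ is s-reachable) is immediate from the construction: the path $P$ produced by the algorithm is itself an s-path. Its endpoints $(a_1,b_1)$ and $(a_n,b_n)$ are valid (otherwise the algorithm would have returned ``no''), its edges lie in $E_A \cup E_B \subseteq E$, and whenever the algorithm advances the B-frog to some index $l$ it does so only at a position $(a_k,b_l)$ with $f(a_k,b_l)=1$. Hence every B-index appearing on $P$ is covered by a valid position, which matches the definition of an s-path exactly.

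For the converse, I would introduce for each B-index $l$ the quantity
\[
\mathrm{cov}^*(l) \;=\; \min\,\{\, k : (a_k,b_l)\text{ is valid and s-reachable from }(a_1,b_1)\,\},
\]
with $\mathrm{cov}^*(l)=\infty$ if no such $k$ exists. The key structural fact is monotonicity: whenever $\mathrm{cov}^*(l+1)<\infty$, one has $\mathrm{cov}^*(l) \le \mathrm{cov}^*(l+1)$. Indeed, any s-path witnessing s-reachability of a valid $(a_k,b_{l+1})$ must contain a valid position $(a_m,b_l)$ by the covering requirement, and $m \le k$ because A-indices are non-decreasing along paths in $G$; taking the minimum gives $\mathrm{cov}^*(l) \le m \le k$.

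Using monotonicity, I would prove by induction on $l$ that after the algorithm finishes processing B-index $l$ its A-frog sits at $a_{\mathrm{cov}^*(l)}$ (assuming $\mathrm{cov}^*(l)<\infty$). The base case $l=1$ is the initial validity check $f(a_1,b_1)=1$. For the inductive step, from $(a_{\mathrm{cov}^*(l)},b_l)$ the algorithm either advances B immediately when $f(a_{\mathrm{cov}^*(l)},b_{l+1})=1$, or first slides A forward to the smallest $k>\mathrm{cov}^*(l)$ with $f(a_k,b_{l+1})=1$; in either case, prepending the inductive s-path exhibits $\mathrm{cov}^*(l+1) \le $ (new A-index), while monotonicity (together with $f(a_{\mathrm{cov}^*(l)},b_{l+1})=0$ in the second case) delivers the matching lower bound. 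If the algorithm ever fails to find a suitable $k$, monotonicity forces $\mathrm{cov}^*(l+1)=\infty$, and hence $(a_n,b_n)$ is not s-reachable. I expect the main obstacle to be the careful proof of monotonicity of $\mathrm{cov}^*$, which requires unpacking that s-paths are allowed to traverse non-valid intermediate positions as long as each B-index is covered somewhere along the path, and combining this with the monotone structure of paths in $G$.
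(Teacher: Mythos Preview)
Your proposal is correct and follows essentially the same route as the paper: an induction on the B-index showing that the algorithm's A-frog never overshoots any valid s-reachable position in that column, with the key inductive ingredient being precisely your monotonicity observation (any s-path to B-index $l+1$ contains a valid s-reachable position at B-index $l$ with no larger A-index). The only cosmetic difference is that you name the quantity $\mathrm{cov}^*(l)$ and prove the equality $\mathrm{alg}(l)=\mathrm{cov}^*(l)$, whereas the paper phrases the invariant as the inequality ``if $(a_i,b_j)$ is s-reachable then some valid $(a_{i'},b_j)$ with $i'\le i$ lies on the algorithm's path'' and folds the monotonicity step directly into the induction.
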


\vspace{-3pt}

Since we did not make any assumptions regarding the function $f$, Algorithm~\ref{alg:FD1S} can be used as the decision procedure for both the discrete \frechet\ distance with one-sided shortcuts and the discrete \frechet\ gap with one-sided shortcuts: given a real number $\delta \ge 0$ or an interval $[s,t]$, we simply replace $f$ by $f_\delta$ or $f_{[s,t]}$, respectively.

\vspace{-10pt}

\begin{algorithm}[h]
	\vspace*{.2cm}
	\begin{enumerate}
		\item If $f(a_1,b_1)=0$ or $f(a_n,b_n)=0$ return ``no''.
		\item $curr \leftarrow(a_1,b_1)$.
		\item While $true$
		\begin{description} 
			\item Assume $curr$ is $(a_i,b_j)$.
			\item $P \leftarrow curr$. 		
			\item[\ \,If] $curr$ is valid ($f(curr)=1$)
			\begin{description} 
				\item[If] $j<n$, set $curr \leftarrow (a_i,b_{j+1})$ \\ (the B-frog jumps to its next point $b_{j+1}$, while the A-frog stays at $a_i$).
				\item[Else] ($j=n$) return ``yes'' \\ (the A-frog can jump directly to $a_n$, while the B-frog is already at $b_n$). 
			\end{description}
			\item[\ \,Else] ($curr$ is non-valid, i.e., $f(curr)=0$)
			\begin{description}
				\item[If] $i<n$, set $curr \leftarrow (a_{i+1},b_j)$  \\
				(the A-frog skips $a_i$, while the B-frog stays at $b_j$).
				\item[Else] ($i=n$) return ``no'' (the frogs cannot reach position $(a_n,b_n)$).
			\end{description}
		\end{description}
	\end{enumerate}
	\vspace*{-.25cm}	
	\caption{dF1S-decision($A,B,f$)}\label{alg:FD1S}
\end{algorithm}

\vspace{-25pt}

\subsection{The search algorithm}
Consider the following sorted distances array: \vspace{-5pt}
\[
D=(d^{\min}_m,\dots,d^{\min}_1=\min\{d(a_1,b_1),d(a_n,b_n)\},\dots,\dfd^S(A,B)=d^{\max}_1,\dots,d^{\max}_k)\ .
\vspace{-5pt}
\]\ 
For any feasible range $[s,t]$ with one-sided shortcuts, it holds that $t \ge \dfd^S(A,B)$ and $s \le \min\{d(a_1,b_1),d(a_n,b_n)\}$. As in Section~\ref{sec:DFG}, we may restrict our attention to ranges whose limits are distances between points of $A$ and points of $B$.

For the rest of this section, whenever we refer to a feasible range, we actually mean a feasible range with one-sided shortcuts.

\begin{wrapfigure}{r}{0.45\textwidth}
	\vspace{-41pt}
	\begin{center}
		\includegraphics[width=0.45\textwidth]{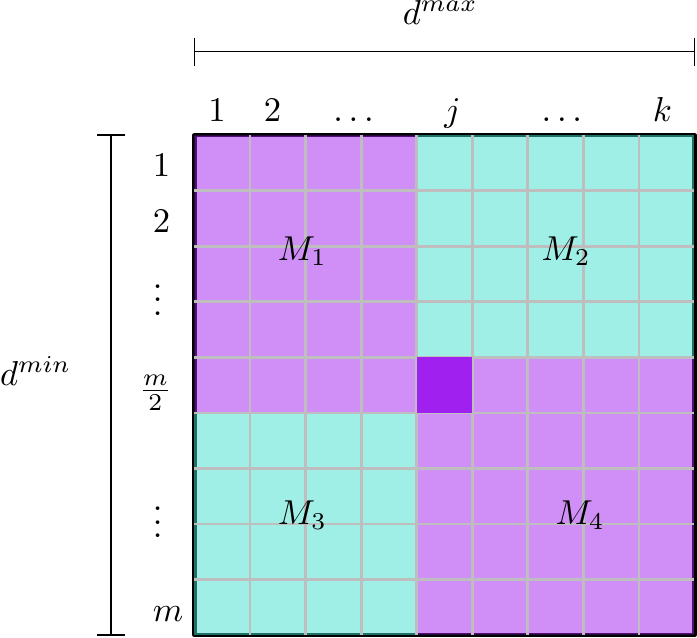}
	\end{center}
	\vspace{-15pt}
	\caption{\small The matrix of possible gaps.}
	\vspace{-15pt}
	\label{fig:GapMatrix}
\end{wrapfigure}
Let $M$ be the matrix whose rows correspond to $d^{\min}_1,\dots,d^{\min}_m$ and whose columns correspond to $d^{\max}_1,\dots,d^{\max}_k$ (see Figure~\ref{fig:GapMatrix}). A cell $M_{i,j}$ of the matrix corresponds to the range $[d^{\min}_i,d^{\max}_j]$. 
$M$ is sorted in the sense that range $M_{i,j}$ contains all the ranges $M_{i',j'}$ with $i' \le i, j' \le j$. Thus, we can perform a binary search in the middle row to find the smallest feasible range $M_{\frac{m}{2},j}=[d^{\min}_\frac{m}{2},d^{\max}_j]$ among the ranges in this row.

$M_{\frac{m}{2},j}$ induces a partition of $M$ into 4 submatrices: $M_1,M_2,M_3,M_4$ (see Figure~\ref{fig:GapMatrix}). Each of the ranges in $M_1$ is contained in a range of the middle row which is not a feasible range, hence none of the ranges in $M_1$ is a feasible range. Each of the ranges in $M_4$ contains $M_{\frac{m}{2},j}$ and hence is at least as large as $M_{\frac{m}{2},j}$. Thus, we may ignore $M_1$ and $M_4$ and focus only on the ranges in the submatrices $M_2$ and $M_3$.

\paragraph{Sketch of the algorithm.}
Our goal is to find the smallest range in $M$ for which the decision algorithm returns ``yes''.
This can be done in $O(n^3 \log n)$ time by first finding in each of $M$'s rows (via binary search) the smallest range for which the decision algorithm returns ``yes'', and then picking the smallest among these $O(n^2)$ ranges.
Below, we sketch a nearly quadratic algorithm for finding the smallest feasible range. 
 
We perform a recursive search in the matrix $M$.
The input to the recursive algorithm is a submatrix $M'$ of $M$ and a graph $G'$ by which one can decide for each range in $M'$ whether it is a feasible range or not. In each recursive call, we perform a binary search in the middle row of $M'$ to find the smallest feasible range in this row, using the graph $G'$. Then, we construct the two graphs for the two submatrices of $M'$ in which we still need to search in the next level of the recursion. 

Notice that we could use the graph $G=G(A \times B, E=E_A \cup E_B)$ in each of the recursive calls, but this would yield an algorithm of running time $O(n^3 \log n)$. Instead, in each recursive call we use a graph whose size is proportional to the number of rows and columns in the submatrix for this call. The introduction of these graphs and their efficient construction is the main contribution of this section. 

We represent $M$ and its submatrices by the indices of the array $D$ that correspond to the rows and columns of $M$. For example, we represent $M$ by $[1,m]\times[1,k]$, $M_1$ by $[1,\frac{m}{2}]\times[1,j-1]$, $M_2$ by $[1,\frac{m}{2}-1]\times[j,k]$, $M_3$ by $[\frac{m}{2}+1,m]\times[1,j-1]$, and $M_4$ by $[\frac{m}{2},m]\times[j,k]$.

The skeleton of the algorithm is given below. Recall that $g$ is a bivariate real function with the property that for any four non-negative real numbers $c \le a \le b \le d$, it holds that $g(a,b) \le g(c,d)$ (see Section~\ref{sec:prelim}). In our case, $g(s,t)=t-s$.

\vspace{-15pt}
\begin{algorithm}[h]
    \vspace*{.25cm}
    \begin{enumerate}
        \item Perform a binary search in the middle row of matrix $M_{[p,p']\times[q,q']}$ to find the smallest feasible range $[d^{\min}_i,d^{\max}_j]$, $i=\frac{p'+p-1}{2}$, using the decision procedure with the graph $G_{[p,p']\times[q,q']}$.
        \item Construct the graphs $G_{[p,i-1]\times[j,q']}$ and $G_{[i+1,p']\times[q,j-1]}$ for the submatrices $M_{[p,i-1]\times[j,q']}$ and $M_{[i+1,p']\times[q,j-1]}$, respectively.
        \item Return \begin{flalign*}
				        \min\{ & g(d^{\max}_j,d^{\min}_i),\\
				        & \mbox{S-Alg}(M_{[p,i-1]\times[j,q']},G_{[p,i-1]\times[j,q']}),\\
				        & \mbox{S-Alg}(M_{[i+1,p']\times[q,j-1]},G_{[i+1,p']\times[q,j-1]})\}.
				        \end{flalign*}
    \end{enumerate}
    \vspace*{-.25cm}
    \caption{S-Alg$\left(M_{[p,p']\times[q,q']},G_{[p,p']\times[q,q']}\right)$}\label{alg:search}
\end{algorithm}
\vspace{-15pt}

The number of potential feasible ranges is equal to the number of cells in $M$, which is $O(n^4)$.
But, since we are looking for the smallest feasible range, we do not need to generate all of them.
We only use $M$ to illustrate the search algorithm, its cells correspond to the potential feasible ranges, but do not contain any values. We thus define the {\em size} of a submatrix of $M$ by the sum of its number of rows and number of columns, for example, $M$ is of size $m+k$, $M_2$ is of size $\frac{m}{2}+k-j$, and $M_3$ is of size $\frac{m}{2}+j-1$.

Notice that the ranges in $M_2$ and $M_3$ consist of all the ranges that intersect $g=[d^{\min}_\frac{m}{2},d^{\max}_j]$ and are neither contained in $g$ nor contain $g$: $M_2$ consists of all the ranges with minimum distance larger than $d^{\min}_\frac{m}{2}$ and maximum distance at least as large as $d^{\max}_j$, and $M_3$ consists of all the ranges with minimum distance smaller than $d^{\min}_\frac{m}{2}$ and maximum distance smaller than $d^{\max}_j$. This implies that for {\em any} range in $M_2$, all the distances in $[d^{mim}_1,d^{\max}_j]$ are in the range and all the distances in $[d^{\min}_m,d^{\min}_\frac{m}{2}]$ are not in the range (see Figure~\ref{fig:FixedValues}). 
\begin{figure}[!h]
	\vspace{-15pt}
	\begin{center}
		\includegraphics[scale=0.8]{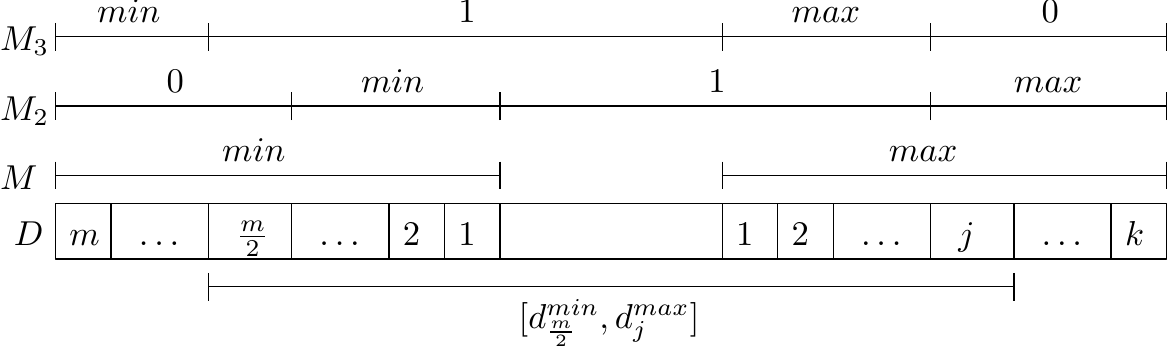}
	\end{center}
	\vspace{-10pt}
	\caption{\small The concept of fixed values.}
	\vspace{-15pt}
	\label{fig:FixedValues}
\end{figure}

More generally, let $M'$ be any of the submatrices associated with the $i$'th level of the recursion tree.
A distance $d \in D$ is {\em fixed} w.r.t. $M'$ if one of the following two statements is correct:
\begin{enumerate}[(i)]
	\item $d$ belongs to all the ranges in $M'$, or
	\item $d$ does not belong to any of the ranges in $M'$.
\end{enumerate}
Otherwise, $d$ is {\em non-fixed} (w.r.t. $M'$). 
The non-fixed distances w.r.t. $M'$ are thus the distances that correspond to the rows and to the columns of $M'$ (see Figure~\ref{fig:FixedValues}).

\begin{restatable}{lemma}{totalSize}
\label{lem:totalSize}
The total size of the matrices in level $i$ of the recursion tree is at most $m+k$, for any level $i$.
\end{restatable}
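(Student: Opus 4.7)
The plan is to prove the lemma by induction on the recursion level $\ell$, after first establishing a simple arithmetic identity for a single recursive split. Write the size of a submatrix $M_{[p,p']\times[q,q']}$ as $\mathrm{size}(M_{[p,p']\times[q,q']}) = (p'-p+1) + (q'-q+1)$, which matches the definition and the examples given for $M$, $M_2$, and $M_3$.

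The key step is the per-split accounting. Suppose S-Alg is invoked on $M_{[p,p']\times[q,q']}$ and, in step~1, finds a smallest feasible range in row $i=\frac{p'+p-1}{2}$ at column $j$. Then the two recursive calls are on $M_{[p,i-1]\times[j,q']}$ and $M_{[i+1,p']\times[q,j-1]}$, whose sizes are $(i-p) + (q'-j+1)$ and $(p'-i) + (j-q)$, respectively. Adding these gives
\[
(i-p) + (q'-j+1) + (p'-i) + (j-q) \;=\; (p'-p) + (q'-q) + 1 \;=\; \mathrm{size}(M_{[p,p']\times[q,q']}) - 1.
\]
Thus each split strictly decreases the total size by exactly $1$ (when both children are non-empty), and by even more when a child is empty or omitted, for example when row $i$ contains no feasible range at all and only the lower submatrix is pursued.

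The induction itself is then immediate. Let $T_\ell$ denote the sum of sizes over all submatrices produced at level $\ell$ of the recursion tree, and let $N_\ell\ge 0$ be the number of non-empty nodes at that level. The base case is $T_0 = m+k$. For the inductive step, summing the per-split identity over all nodes at level $\ell$ yields $T_{\ell+1} \le T_\ell - N_\ell \le T_\ell$, and so by induction $T_\ell \le m+k$ for every level $\ell$.

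The only step that requires mild care is the verification that the split identity continues to hold at the boundaries: when $i=p$ the upper child has zero rows and contributes size $q'-j+1$ only, and analogously when $j=q$ the lower child has zero columns. In each such case the sum of children's sizes is at most parent size $-\,1$, which is all the induction needs; no arithmetic obstacle arises. Hence the main work in the argument is purely the single-split counting shown above, and there is no genuine difficulty beyond bookkeeping.
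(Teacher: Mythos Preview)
Your proof is correct and follows essentially the same approach as the paper: induction on the level, showing that the sizes of the two children sum to the parent's size minus one. The paper's proof is terser and omits the boundary-case discussion, but the core arithmetic identity is identical.
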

It follows that the number of non-fixed distances in each level of the recursion is at most $m+k$. We wish to use this fact to reduce the running time of our algorithm.

\subsection{The construction of the graphs}

Let $d_{i,j}$ denote the distance between $a_i$ and $b_j$. We say that distance $d$ {\em belongs} to submatrix $M'$ of $M$ (and write $d \in M'$) if $d$ is one of the distances corresponding to the rows and columns of $M'$.

We first construct the graph $G=G(V=A \times B, E=E_A \cup E_B)$. Each vertex $v_{i,j}=(a_i,b_j) \in V$ has two outgoing edges: 
\begin{enumerate}
\item $jumpB(v_{i,j})=v_{i,j+1}$ (and if $j=n$, $jumpB(v_{i,j})=v_{n,n}$), and 
\item $jumpA(v_{i,j})=v_{i+1,j}$ (and if $i=n$, $jumpA(v_{i,j})=null$). 
\end{enumerate}
Then, we construct the graph $G_0$ for the matrix $M_0=M$ from $G$, by removing the vertices of $G$ whose corresponding distances are fixed w.r.t. $M_0$ (these are all the distances between $d^{\min}_1$ and $d^{\max}_1$ in the array $D$), and updating the edge set as described below.

In general, let $M_{i-1}$ and $G_{i-1}$, $i > 0$, be a matrix in level $i-1$ and the graph constructed for it;
then $V_{i-1}=\{v_{i,j} \ | \ d_{i,j} \in M_{i-1}\} \cup \{v_{1,1},v_{n,n}\}$.
Let $M_i$ be one of the two submatrices of $M_{i-1}$ in level $i$. 
We describe how $G_i$ is obtained from $G_{i-1}$ (and how $G_0$ is obtained from $G$).
Some of the vertices of $G_{i-1}$ are {\em fixed} w.r.t. $G_i$, i.e., their corresponding distances are fixed w.r.t. $M_i$ (i.e.,
they do not belong to $M_i$). We say that such a fixed vertex $v$ (whose corresponding distance is $d$) is {\em valid} (resp., {\em non-valid}), if $d$ belongs to all ranges in $M_i$ (resp., if $d$ does not belong to any of the ranges in $M_i$).
Since backtracking is forbidden, $G_{i-1}$ is acyclic and one can topologically sort its vertices. 
We do so, and then process the vertices, one by one, in reverse order (i.e., from last to first).
More precisely, for each vertex $v \in V_{i-1}$, we run the code fragment below, where $d$ is the distance corresponding to $v$. 
The code fragment sets the pointer $next(v)$, for each fixed vertex $v \in V_{i-1}$, so that $next(v)$ is the first non-fixed vertex in the path (in $G_{i-1}$) beginning at $v$ that is induced by the greedy decision algorithm.

\begin{description}
	\item[set] $next(v_{n,n})\leftarrow v_{n,n}$
    \item[if] $v$ is fixed and $v\neq v_{n,n}$
        \begin{description}
            \item[if] $v$ is valid ($f_{[s,t]}(v)=1$ for all ranges $[s,t]\in M_i$)
                \begin{description}
                    \item[if] $jumpB(v)$ is fixed, $next(v)\leftarrow next(jumpB(v))$
                    \item[if] $jumpB(v)$ is not fixed, $next(v)\leftarrow jumpB(v)$
                \end{description}
            \item[if] $v$ is non-valid ($f_{[s,t]}(v)=0$ for all ranges $[s,t]\in M_i$)
                \begin{description}
                    \item[if] $jumpA(v)$ is fixed, $next(v)\leftarrow next(jumpA(v))$
                    \item[if] $jumpA(v)$ is non-fixed, $next(v)\leftarrow jumpA(v)$
                \end{description}
        \end{description}
    \item[if] $v$ is non-fixed or $v=v_{1,1}$
        \begin{description}
            \item[if] $jumpB(v)$ is fixed, $jumpB(v)\leftarrow next(jumpB(v))$ (else, do nothing)
            \item[if] $jumpA(v)$ is fixed, $jumpA(v)\leftarrow next(jumpA(v))$ (else, do nothing)
        \end{description}
\end{description}

Notice that after processing all the vertices of $G_{i-1}$, it holds that (i) for any fixed vertex $v \ne v_{n,n}$, $next(v)$ is non-fixed, and
(ii) for any non-fixed vertex $v$, $jumpB(v)$ and $jumpA(v)$ are also non-fixed (unless maybe when $jumpB(v)=v_{n,n}$ or $jumpA(v)=v_{n,n}$).
We thus set $V_i = V_{i-1} \setminus \{v \in V_{i-1} \ | \ v \mbox{ is fixed w.r.t. } M_i$ $\mbox{and } v \neq v_{1,1},v_{n,n}\} = \{v_{i,j} \ | \ d_{i,j} \in M_i\} \cup \{v_{1,1},v_{n,n}\}$ and define $G_i$ as the graph induced by $V_i$.
See Figure~\ref{fig:DecisionGraph} for an example.

\begin{figure*}[!h]
	\begin{center}
		\subfloat[]{\includegraphics[scale=0.7]{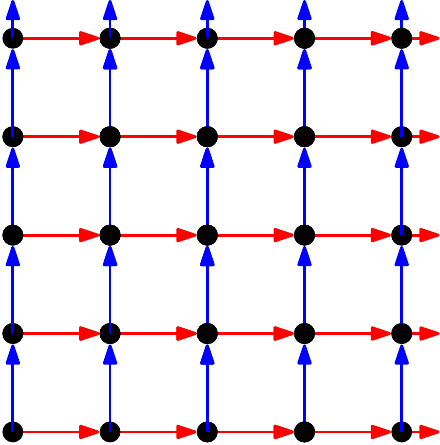}}  \hspace{1cm}
		\subfloat[]{\includegraphics[scale=0.7]{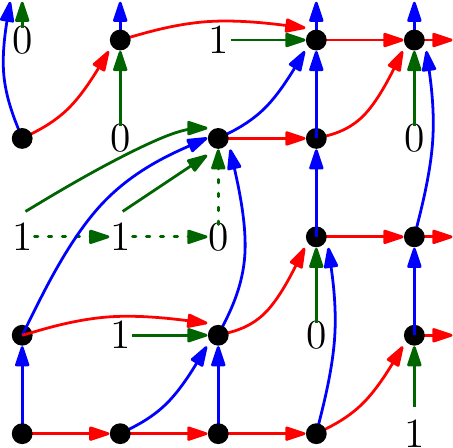}} 
	\end{center}
	\vspace{-10pt}
	\caption{\small (a) The graph $G_0$. (b) A graph that was constructed from $G_0$ in level~1 of the recursion tree. The points represent the vertices of the graphs (all non-fixed). The 1's (resp., the 0's) are the vertices of $G_0$ that are fixed valid (resp., fixed non-valid) w.r.t. to the new graph. The green arrows mark $next()$, the red ones mark $jumpB()$, and the blue ones mark $jumpA()$. The new graph consists only of the points and the red and blue arrows.}
	\label{fig:DecisionGraph}
\end{figure*}

Given a range $[s,t]$ corresponding to the matrix $M_i$, we can apply the decision procedure to the graph $G_i$: If both the distances corresponding to $v_{1,1}$ and to $v_{n,n}$, respectively, are within the range (i.e., $f_{[s,t]}(v_{1,1})=1$ and $f_{[s,t]}(v_{n,n})=1$), then perform the following loop (otherwise, return ``no''). Let $v$ be the current vertex (where initially $v=v_{1,1}$). If $v$ is valid (i.e., if $f_{[s,t]}(v)=1$), go to $jumpB(v)$, else go to $jumpA(v)$. Return ``yes'' if and only if you have reached $v_{n,n}$. This takes only $O(|V_i|)$ time.

\paragraph{Correctness.}
It remains to prove that the decision obtained when applying the decision procedure to $G_i$ is the same as the one obtained when applying it to the original graph $G$.

\begin{restatable}{lemma}{shortcutsCorrectness}
\label{lem:shortcuts-correctness}
Given a range $[s,t]$ corresponding to $M_i$,
the decision algorithm applied to $G_i$ returns ``yes'' if and only if the decision algorithm applied to $G$ returns ``yes''.
\end{restatable}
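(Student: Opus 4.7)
The decision procedure (Algorithm~\ref{alg:FD1S}) is deterministic: at each visited vertex $v$ it checks $f_{[s,t]}(v)$ and then takes $jumpB(v)$ or $jumpA(v)$ accordingly, halting as soon as it reaches $v_{n,n}$ (``yes'') or would step past it (``no''). Thus the algorithm traces a unique path in $G$ starting from $v_{1,1}$, and the same procedure on $G_i$ traces a unique path starting from $v_{1,1}$ using the modified pointers. My plan is to show that these two paths are identical after restricting the $G$-path to the vertex set $V_i$, and in particular that they both reach $v_{n,n}$ on exactly the same inputs.

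The key property I would first establish is about the $next$ pointer. Let $[s,t]$ be any range corresponding to a cell of $M_i$. Because each fixed vertex $v$ is either valid for every range in $M_i$ or non-valid for every range in $M_i$, the greedy decision procedure takes the \emph{same} outgoing edge at $v$ regardless of which range in $M_i$ was chosen (namely $jumpB(v)$ if $v$ is fixed valid, else $jumpA(v)$). I would then prove by reverse induction on the topological order used in the construction that for every fixed $v \ne v_{n,n}$, $next(v)$ equals the first vertex of $V_i$ encountered when running the decision procedure in $G$ starting at $v$ (or equals $v_{n,n}$ if that is the first such vertex encountered). The base case is any fixed $v$ whose greedy successor is already non-fixed or equals $v_{n,n}$; the inductive step follows immediately from the two branches of the code fragment, which set $next(v) \leftarrow next(\cdot)$ on a fixed successor and $next(v) \leftarrow \cdot$ on a non-fixed successor.

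With this in hand, I would prove the simulation claim: for every non-fixed vertex $u \in V_i$ (including $v_{1,1}$), the outgoing edge taken by the decision procedure on $G_i$ from $u$ leads to the same vertex as one or more consecutive steps of the decision procedure on $G$ starting at $u$, up to and including the first vertex in $V_i \cup \{v_{n,n}\}$ that is reached. This follows directly from the rewriting rule that, when $u$ is non-fixed, $jumpB(u)$ (resp.\ $jumpA(u)$) in $G_i$ is redefined to be $next(\cdot)$ of its original target whenever that target was fixed, together with the previous paragraph's characterization of $next$. Iterating this step-by-step correspondence from $v_{1,1}$ gives that the path traced by the decision procedure on $G_i$ is precisely the subsequence of the $G$-path consisting of non-fixed vertices plus the endpoints.

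Finally I would handle termination. The algorithm on $G_i$ begins by checking $f_{[s,t]}(v_{1,1})=1$ and $f_{[s,t]}(v_{n,n})=1$; these are exactly the checks Algorithm~\ref{alg:FD1S} performs in step~1 on $G$. During the loop, the procedure on $G_i$ returns ``yes'' iff it reaches $v_{n,n}$, which by the simulation claim happens iff the procedure on $G$ reaches $v_{n,n}$; otherwise some $jumpA$ from $v_{n,\cdot}$ (equivalently, $jumpA$ on $G_i$ undefined because it chases $next$ past the last row) is attempted, matching the ``no'' branch of Algorithm~\ref{alg:FD1S}. I expect the main technical obstacle to be a careful bookkeeping of the terminal row/column: ensuring that greedy paths in $G$ that would fall off the grid correspond exactly to the $G_i$-paths that fail to reach $v_{n,n}$, and that the special conventions $jumpB(v_{i,n})=v_{n,n}$ and $jumpA(v_{n,j})=\mathit{null}$ are preserved correctly under the $next$ redirection; this is essentially a case analysis on whether the first non-fixed descendant of a fixed vertex is $v_{n,n}$ itself.
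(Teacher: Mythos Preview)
Your overall strategy matches the paper's: characterize $next(v)$ as ``the greedy successor through fixed vertices'', then show the decision procedure on $G_i$ traces exactly the subsequence of non-fixed vertices along the original greedy path. The paper packages the first part as a separate helper lemma (Lemma~\ref{lem:fixed-paths}) and then proves Lemma~\ref{lem:shortcuts-correctness} using it, but the content is the same.

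There is, however, one genuine gap. You compare $G_i$ directly to $G$: you claim $next(v)$ equals the first vertex of $V_i$ met when running the greedy procedure \emph{in $G$}. But the construction is incremental: $G_i$ is built from $G_{i-1}$, and the code fragment that sets $next(\cdot)$ uses the $jumpA$, $jumpB$ pointers of $G_{i-1}$, not those of $G$. So what your reverse-topological induction actually proves is that $next(v)$ is the first $V_i$-vertex reached by the greedy walk \emph{in $G_{i-1}$}. To get from there to $G$ you need an outer induction on the recursion level $i$: assume the decision on $G_{i-1}$ agrees with that on $G$, and then your argument (which shows $G_i$ simulates $G_{i-1}$) closes the step. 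The paper's proof makes this outer induction on $i$ explicit; without it, your ``simulation claim'' relating $G_i$ directly to $G$ is not justified.

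A second, smaller point: your treatment of the ``no'' termination is a bit loose. In the paper the case split is on whether the last vertex $v_q$ of the $G_{i-1}$-path is itself fixed or non-fixed w.r.t.\ $G_i$; in the fixed case one must check that $next(v_q)=null$ propagates correctly back to the last non-fixed vertex so that the $G_i$-walk also falls off. Your sketch gestures at this (``chases $next$ past the last row'') but the argument needs the same reverse-induction on $next$ that you used in the main body, now with target $null$ rather than a vertex of $V_i$.
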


\paragraph{Running time.}
Consider the recursion tree. It consists of $O(\log n)$ levels, where
the $i$'th level is associated with $2^i$ disjoint submatrices of the matrix $M$.
Level 0 is associated with the matrix $M_0=M$, level 1 is associated with the submatrices $M_2$ and $M_3$ of $M$ (see Figure~\ref{fig:GapMatrix}), etc.

A {\em range test} is a test that determines for two distances $d_1<d_2$ in the sorted array of distances $D$ \old{(see Figure~\ref{fig:distances})} whether the range $[d_1,d_2]$ is a feasible range.
In the $i$'th level we perform $O(\log n)$ range tests in each of the $2^i$ submatrices associated with this level.
We claim that the total time spent on the $i$'th level is $O(n^2\log n)$.
This bound includes the preparations towards the next level.
Therefore, the running time of the entire algorithm is $O(n^2 \log^2 n)$.

We now focus on the analysis of the $i$'th level.
Let $M'$ be any of the submatrices associated with the $i$'th level.
Our algorithm guarantees that the cost of a range test, for a range corresponding to $M'$, is linear in the size of $M'$.
By Lemma~\ref{lem:totalSize}, the total size of the submatrices in level $i$ is $m+k \le n^2$, and therefore
the total cost of all range tests performed in the $i$'th level is $O(n^2 \log n)$.
Finally, the preparations towards the next level require only $O(n^2)$ time.

The following theorem summarizes the main result of this section.
\begin{theorem}
Let $A=(a_1,\ldots,a_n)$ and $B=(b_1,\ldots,b_n)$ be two sequences of points. Then,
the discrete \frechet\ gap with one-sided shortcuts $\dfg^S(A,B)$ and 
the discrete \frechet\ ratio with one-sided shortcuts $\dfr^S(A,B)$ can be computed in $O(n^2 \log^2 n)$ time.
\end{theorem}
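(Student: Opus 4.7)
The plan is to verify correctness of Algorithm~\ref{alg:search} by induction on the depth of the recursion, then bill the running time level by level, and finally observe that replacing the gap function $t-s$ by the ratio $t/s$ changes nothing essential in the argument.

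For correctness, I would show by induction on the recursion depth that S-Alg$\left(M',G'\right)$ returns the minimum $g$-value over the feasible ranges indexing $M'$ (and $+\infty$ if no such range exists). By Lemma~\ref{lem:shortcuts-correctness}, running dF1S-decision on the compressed graph $G'$ produces the same verdict as running it on the full graph $G$ for any range corresponding to $M'$. Hence the binary search on the middle row of $M'$ correctly identifies the smallest feasible range $[d^{\min}_i,d^{\max}_j]$ in that row. Two structural facts about the sorted matrix $M$ then eliminate the submatrices $M_1'$ and $M_4'$ arising from this split: every range in $M_1'$ is contained in some infeasible middle-row range, and containment preserves infeasibility because $f_{[s_2,t_2]}(v)=1$ implies $f_{[s_1,t_1]}(v)=1$ whenever $[s_2,t_2]\subseteq[s_1,t_1]$; every range in $M_4'$ contains $[d^{\min}_i,d^{\max}_j]$, so by the monotonicity of $g$ its $g$-value is at least $g(d^{\min}_i,d^{\max}_j)$. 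Returning the minimum of the current candidate and the recursive answers on $M_2'$ and $M_3'$ is therefore exact.

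For running time, I would account for cost by level of the recursion tree. By Lemma~\ref{lem:totalSize}, the submatrices at any level $i$ have combined size at most $m+k=O(n^2)$. Within a submatrix $M'$ the algorithm performs a binary search consisting of $O(\log n)$ range tests using $G'$, each of cost linear in $|V(G')|=O(|M'|)$; summed over the level this is $O(n^2\log n)$. Constructing the two child graphs from $G'$ costs $O(|V(G')|)$ (one topological sort followed by one reverse pass assigning the $next$, $jumpA$, $jumpB$ pointers), and summing again over the level adds only $O(n^2)$. Since there are $O(\log n)$ levels in the recursion tree, the total cost is $O(n^2\log^2 n)$. For the ratio, I would observe that every step of the argument invokes only the monotonicity property $g(a,b)\le g(c,d)$ whenever $c\le a\le b\le d$, a property shared by $t-s$ and $t/s$; so the same algorithm with $g(s,t)=t/s$ computes $\dfr^S(A,B)$ within the same time bound.

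The main obstacle in turning this plan into a rigorous proof lies in Lemma~\ref{lem:shortcuts-correctness}: one must verify that collapsing runs of fixed-valid vertices into a single $jumpB$ edge (and symmetrically runs of fixed-non-valid vertices into a $jumpA$ edge) faithfully simulates the deterministic trajectory that dF1S-decision would follow on the full graph $G$. A secondary concern is to check that the amortised cost of graph reconstruction does not blow up across levels; this follows because each child graph is built from its parent in a single linear sweep, never from scratch, so the bound on $|V(G')|$ is inherited from Lemma~\ref{lem:totalSize}.
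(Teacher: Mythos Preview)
Your proposal is correct and follows essentially the same approach as the paper: induction on recursion depth for correctness (via Lemma~\ref{lem:shortcuts-correctness} and the monotonicity elimination of $M_1',M_4'$), level-by-level accounting for the running time (via Lemma~\ref{lem:totalSize}), and the observation that only the monotonicity of $g$ is used so that $t/s$ works identically. You have also correctly identified Lemma~\ref{lem:shortcuts-correctness} as the place where the real work lies.
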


\old{
Sometimes, computing the smallest $\delta=t/s$ for which $[s,t]$ is a feasible range with one-sided shortcuts might be more meaningful. We call this variant the {\em discrete \frechet\ ratio with one-sided shortcuts} and denote it by $\dfr^S(A,B)$. In order to compute $\dfr^S(A,B)$, we only need to change Algorithm~\ref{alg:search} so that it returns the minimum value among $d^{\max}_j/d^{\min}_i$ and the values returned by the two recursive calls (instead of returning the minimum value among  $d^{\max}_j-d^{\min}_i$ and the values returned by the two recursive calls). We thus have

\begin{corollary}
The discrete \frechet\ ratio with one-sided shortcuts $\dfr^S(A,B)$ can be computed in $O(n^2 \log^2 n)$ time.
\end{corollary}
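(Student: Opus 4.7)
The plan is to run \textbf{S-Alg} on the root matrix $M=[1,m]\times[1,k]$ paired with the graph $G_0$ (constructed from $G$ by stripping all vertices whose distances are fixed w.r.t.\ $M_0$), and to argue that the value it returns equals $\dfg^{S}(A,B)$, and then to bound the running time by combining Lemmas~1--3 with the size accounting. Since the paper already supplies the three ingredients (correctness of the linear-time greedy decision procedure, compression-correctness of the $G_i$'s, and the level-sum bound), the theorem is essentially a bookkeeping argument that stitches them together.

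For correctness I would proceed by induction on the depth of the recursion tree. At each call on a submatrix $M'=M_{[p,p']\times[q,q']}$, the binary search in the middle row finds the smallest $j$ such that $[d^{\min}_i,d^{\max}_j]$ is feasible (here $i=\lfloor(p'+p-1)/2\rfloor$); Lemma~1 validates each test on the graph and Lemma~3 certifies that running the decision procedure on $G_{[p,p']\times[q,q']}$ is equivalent to running it on the original $G$, so the value $g(d^{\min}_i, d^{\max}_j)$ is indeed attained by a feasible range. Exactly as argued around Figure~\ref{fig:GapMatrix}, every range in $M_1=[p,i]\times[q,j-1]$ is contained in the infeasible $[d^{\min}_i,d^{\max}_{j-1}]$ and hence infeasible, while every range in $M_4=[i,p']\times[j,q']$ has $g$-value at least $g(d^{\min}_i,d^{\max}_j)$ by monotonicity of $g$. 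Thus the optimum in $M'$ is the minimum of $g(d^{\min}_i,d^{\max}_j)$ and the optima of the two strictly smaller submatrices, which by induction are computed correctly by the recursive calls. Applied at the root, this yields $\dfg^{S}(A,B)$.

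For the running time, observe that the recursion tree has $O(\log n)$ levels because at each step the number of rows of the active submatrix roughly halves. Fix a level $i$. By Lemma~2 the total size of the submatrices at level $i$ is at most $m+k=O(n^2)$. In each submatrix $M'$ at level $i$ we run $O(\log n)$ range tests, and by construction of $G_i$ each test takes time $O(|V_i|)$, which is proportional to the size of $M'$ (the only vertices retained are those whose distances belong to $M'$, plus $v_{1,1},v_{n,n}$). Summed over level $i$ this gives $O(n^2 \log n)$, and the preparation of the two child graphs $G_{[p,i-1]\times[j,q']}$ and $G_{[i+1,p']\times[q,j-1]}$ from $G_i$ is a single topological-order pass costing $O(|V_i|)$, which is absorbed. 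Multiplying by the $O(\log n)$ levels produces the claimed $O(n^2\log^2 n)$ bound.

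The ratio case is immediate once we verify that $g(s,t)=t/s$ satisfies the monotonicity hypothesis of Section~\ref{sec:prelim}; indeed for $0<c\le a\le b\le d$ we have $t/s$-monotonicity since $b/a\le d/c$. The only subtlety is handling $s=0$, but $s=0$ is meaningful only when some pair in $A\times B$ coincides; in that degenerate case one simply removes $0$ from the array $D$ (or treats $\dfr^{S}=\dfd^{S}(A,B)/0$ as a separate trivial case) and runs the same algorithm, so the overall bound $O(n^2 \log^2 n)$ remains valid for both $\dfg^{S}$ and $\dfr^{S}$. The main obstacle in this write-up is not any single step but making sure that the invariant ``non-fixed distances w.r.t.\ $M_i$ are exactly the distances indexing the rows and columns of $M_i$'' is preserved through the $G_{i-1}\to G_i$ construction, so that the per-test cost really is linear in the size of $M_i$; this is where Lemma~3 and the careful topological pass earn their keep.
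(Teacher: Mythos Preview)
Your proposal is correct and follows the paper's approach: the paper's own argument for this corollary is a one-liner---simply replace $d^{\max}_j - d^{\min}_i$ by $d^{\max}_j / d^{\min}_i$ in the return step of Algorithm~\ref{alg:search} and rely on the already-established analysis, which works for any monotone $g$. You essentially reprove the full optimization theorem before reaching the ratio remark (and add a harmless aside about the $s=0$ edge case), which is more than the corollary requires, but the substantive content---monotonicity of $t/s$ plugged into the same scheme---is identical.
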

}

\section{Computing the weak discrete \frechet\ gap}
\label{sec:DFGW}
We apply the high-level search algorithm to the weak discrete \frechet\ gap variant. For this we need to (i) describe a suitable greedy decision algorithm and (ii) show how to efficiently construct the graphs for the two submatrices of the next level. Due to space limitations, we only briefly discuss (i); full details of both (i) and (ii) (which is the key issue here) are given in Appendix~\ref{apx:DFGW}.

Let $G'=G(A \times B, E')$, where $E'=E_A \cup E_B \cup \{(v,u)|(u,v) \in E_A \cup E_B\}$. That is, $G'$ is obtained from the graph $G$ of the
`strong' version, by adding the backward edges. (For simplicity, we assume in this section that the frogs are not allowed to jump simultaneously.)   
Notice that $G'$ is a planar graph. We view $G'$ as a maze. Each vertex is a room with four doors, one for each outgoing edge, that lead to the adjacent rooms (see Figure~\ref{fig:weak}). A man standing in room $(a_1,b_1)$ wants to reach room $(a_n,b_n)$, but without entering forbidden rooms (i.e., rooms corresponding to non-valid positions).

\begin{wrapfigure}{r}{0.4\textwidth}
	\vspace{-30pt}
	\begin{center}
		\includegraphics[width=0.3\textwidth]{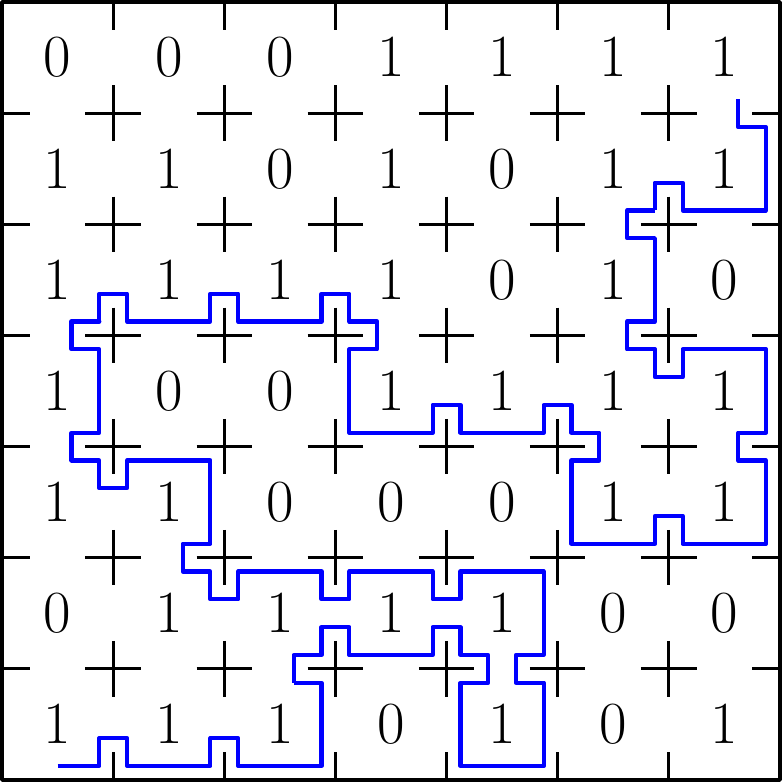}
	\end{center}
	\vspace{-15pt}
	\caption{\small Following the walls.}
	\vspace{-30pt}
	\label{fig:weak}
\end{wrapfigure}

A well known algorithm for traversing a maze is the \emph{wall-follower} rule (also known as the right-hand rule): keep your right hand in contact with a wall of the maze throughout the motion. The algorithm guarantees that you will eventually reach the exit, if possible.
Thus, one can find in $O(n^2)$ time a weak-path in $G'$ from $(a_1,b_1)$ to $(a_n,b_n)$, if such a path exists (or reach $(a_1,b_1)$ if no such path exists).

The following theorem summarizes the main result of this section.
 \begin{theorem}
 	Let $A=(a_1,\ldots,a_n)$ and $B=(b_1,\ldots,b_n)$ be two sequences of points. Then,
 	the weak discrete \frechet\ gap $\dfg^w(A,B)$ and the weak discrete \frechet\ ratio $\dfr^w(A,B)$ can be computed in $O(n^2 \log^2 n)$ time.
 \end{theorem}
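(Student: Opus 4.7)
The plan is to instantiate the general optimization scheme of Section~\ref{sec:DFGS} for the weak variant. Two ingredients are required: a greedy decision procedure whose running time is linear in the size of the underlying graph, and an efficient graph-construction mechanism so that at every level of the recursion the combined size of the graphs stays $O(n^2)$, exactly in the spirit of the construction that produced $G_i$ from $G_{i-1}$ in Section~\ref{sec:DFGS}.

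For the decision procedure, I would formalize the wall-follower argument already sketched. On the full graph $G'$, viewed as a planar maze in which non-valid positions are walls, starting at $(a_1,b_1)$ and keeping the right hand on a wall throughout the walk, the walker either reaches $(a_n,b_n)$ or returns to its starting position. A standard potential argument on the directed edges lying on the inner boundary of the walker's connected valid component shows termination in $O(n^2)$ steps, and correctness follows because the entire boundary of the component containing $(a_1,b_1)$ is eventually traced. Combined with the $O(n^2)$-candidate-pairs argument of Section~\ref{sec:DFG}, this already yields an $O(n^4)$ algorithm, which we will speed up via the matrix-search scheme.

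The main obstacle, and the technical core, is constructing for each submatrix $M'$ in the recursion tree a graph $G'_{M'}$ whose size is proportional to the number of rows plus columns of $M'$ and on which the wall-follower returns the correct answer for every range in $M'$. Unlike the one-sided shortcut case, the walker can visit a fixed valid vertex many times and may cross fixed regions via backward edges, so a single \emph{next} pointer per vertex no longer suffices. Instead, for each maximal connected block $R$ of fixed vertices adjacent to non-fixed vertices of $M'$, I would precompute a transition table: for every pair (non-fixed boundary vertex, incoming direction) on $\partial R$, the pair (non-fixed boundary vertex, outgoing direction) at which the wall-follower exits $R$. Since $G'$ is planar and the right-hand rule amounts to walking along the inner boundary of $R$, this table can be built in time linear in $|R|$ by one boundary traversal per block, treating fixed non-valid vertices as walls and fixed valid ones as empty rooms. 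The reduced graph then replaces every crossing of $R$ by a single $O(1)$-time table lookup.

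Because the walker on this reduced graph follows each boundary edge at most once per direction, a single range test costs $O(|V(G'_{M'})|)$, which is $O(\mathrm{size}(M'))$ by construction. By Lemma~\ref{lem:totalSize} the total size of the submatrices in any fixed level of the recursion is $O(n^2)$, and preparing all reduced graphs for level $i+1$ from those of level $i$ can be done in $O(n^2)$ aggregate work: each vertex migrates into at most one child graph, and each fixed block is traversed once to build its transition table. Combined with the $O(\log n)$ binary-search range tests per submatrix, level $i$ contributes $O(n^2 \log n)$ time, and the $O(\log n)$ levels give a total of $O(n^2 \log^2 n)$. The scheme is oblivious to whether the monotone function $g$ being optimized is $g(s,t)=t-s$ or $g(s,t)=t/s$, so the same bound applies to $\dfr^w(A,B)$.
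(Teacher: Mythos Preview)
Your proposal is correct and follows essentially the same route as the paper's argument in Appendix~\ref{apx:DFGW}: a wall-follower (right-hand rule) decision procedure on the planar grid $G'$, contraction of fixed-vertex regions so that each reduced graph has size proportional to its submatrix, and Lemma~\ref{lem:totalSize} to bound the per-level work by $O(n^2\log n)$. The only cosmetic difference is that the paper treats fixed $0$-rooms and fixed $1$-rooms separately---deleting edges into the former and computing a $next(\cdot)$ pointer along the outer boundary of each connected component of the latter---whereas you merge both kinds into a single block equipped with a (boundary vertex, incoming direction) transition table; the two constructions are equivalent.
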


\old{	
	\section{Conclusions and open questions}
	\todo{}
	
	\frechet\ under translations?
	
	\frechet\ quotient?
	
	other variants?
	
	It is surprising that the shortcuts and weak variants have more efficient algorithms...
}

\vspace{-15pt}
\bibliography{refs}
\vspace{-10pt}

\newpage
\appendix

\section{Computing the weak discrete \frechet\ gap}
\label{apx:DFGW}
In this section we apply the high-level search algorithm (i.e., Algorithm~\ref{alg:search}) to the problem of computing the weak discrete \frechet\ gap. For this we need to (i) describe a suitable greedy decision algorithm and (ii) show how to efficiently construct the graphs for the two submatrices of the next level.
For simplicity, we assume in this section that the frogs are not allowed to jump simultaneously, however, our solution can be easily adapted to the case where simultaneous jumps are allowed. 

\subsection{The decision procedure}
Let $G'=G(A \times B, E')$, where $E'=E_A \cup E_B \cup \{(v,u)|(u,v) \in E_A \cup E_B\}$. That is, $G'$ is obtained from the graph $G$ of the `strong' version, which contains only the forward edges, by adding the backward edges.
Let $f: A \times B \rightarrow \{0,1\}$ be an indicator function, which determines for each position whether it is valid or not.
We provide an $O(n^2)$-algorithm for finding a weak-path in $G'$ (if exists), i.e., a path in $G'$ from $(a_1,b_1)$ to $(a_n, b_n)$ that consists of forward and possibly also backward edges.    

We describe the weak-path finding algorithm through an analogy to maze traversal.
Notice that $G'$ is a planar graph. We view $G'$ as a maze. Each vertex is a room with four doors, one for each outgoing edge, that lead to the adjacent rooms. For a room $(a_i,b_j)$, the north and south doors lead to rooms $(a_{i+1},b_j)$ and $(a_{i-1},b_j)$, respectively, and the east and west doors lead to $(a_i,b_{j+1})$ and $(a_i,b_{j-1})$, respectively. All doors are closed, but some are locked and some are unlocked. More precisely, 
a door is unlocked if and only if the rooms on both its sides correspond to valid positions.  

\begin{wrapfigure}{r}{0.5\textwidth}
	\vspace{-41pt}
	\begin{center}
		\includegraphics[width=0.4\textwidth]{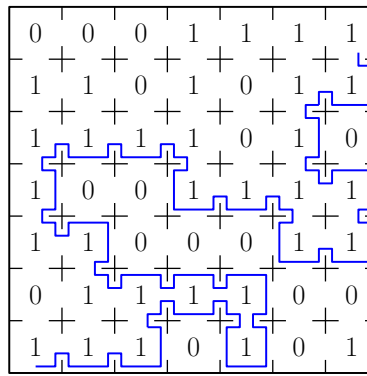}
	\end{center}
	\vspace{-15pt}
	\caption{\small Following the walls (right-hand rule).}
	\vspace{-15pt}
	\label{fig:right-hand}
\end{wrapfigure}
A well known algorithm for traversing a maze is the \emph{wall-follower} rule (also known as the right-hand rule): keep your right hand in contact with a wall of the maze throughout the motion. The algorithm guarantees that you will eventually reach the exit, if possible. (In our setting, whenever our hand encounters an unlocked door, we go through it as if the wall ended at the door.)
The wall-follower rule works only for simply connected mazes, i.e., where all the walls of the maze are connected to the outer boundary of the maze, either directly or indirectly through other walls. In our setting, there might be rooms that are not connected to the outer boundary (when considering graphs for submatrices), but the rule will still work, since the starting and ending points are on the boundary.

Thus, starting from the southern wall of room $(a_1,b_1)$ and using the right-hand rule, one can find in $O(n^2)$ time a weak-path in $G'$ from $(a_1,b_1)$ to $(a_n,b_n)$, if such a path exists (see Figure~\ref{fig:right-hand}).
If no such path exists, the right-hand rule will lead back to $(a_1,b_1)$, and the decision algorithm will return ``no''.

\old{
\begin{algorithm}[H]
	\vspace*{.25cm}
	\begin{enumerate}
		\item If $f(a_{1},b_{1})=0$ or $f(a_{n},b_{n})=0$ return ``no''.
		\item $curr\leftarrow(a_{1},b_{1})$.
		\item $direction\leftarrow 0$ (Directions: $east=0,south=1,west=2,north=3$).
		\item While $curr\ne (a_{n},b_{n})$:
		\begin{description} 
			\item [If] $curr[direction]\ne null$ and $f(curr[direction])=1$
			\begin{description} 
				\item $curr\leftarrow curr[direction]$.
				\item $direction=(direction+1)\%4$ (take a right turn).
			\end{description}
			\item [Else] ($f(curr[direction])=0$)
			\begin{description} 
				\item $direction=(direction-1)\%4$  (take a left turn).
			\end{description}
			\item [If] $curr=(a_{1},b_{1})$, return ``no''.
		\end{description}
		\item Return ``yes''.
	\end{enumerate}
	\caption{dFW-decision($A,B,f$)}\label{alg:FDW}
\end{algorithm}
}

We note that other algorithms exist for deciding whether a weak-path in $G'$ exists (for a given function $f$), but since this algorithm is greedy, it enables us to use the ``fixed distances'' idea that was used in the previous section.

\subsection{The construction of the graphs}

The notion of fixed and non-fixed distances (introduced in Section~\ref{sec:DFGS}) is relevant here as well. Since the rooms in our analogy correspond to vertices in the graph, which in turn correspond to distances in $D$, we simply use the terms: fixed 1-room (or fixed 0-room) for a room which corresponds to a distance that is fixed valid (or fixed non-valid), and non-fixed room for a room that corresponds to a non-fixed distance.

Initially, we have the graph $G'$. Each vertex (or room) $v_{i,j}=(a_i,b_j)$ has four outgoing (directed) edges (or doors):
(i) $north(v_{i,j})=v_{i+1,j}$ (and if $i=n$, $north(v_{i,j})=null$),
(ii) $east(v_{i,j})=v_{i,j+1}$ (and if $j=n$, $east(v_{i,j})=null$),
(iii) $south(v_{i,j})=v_{i-1,j}$ (and if $i=1$, $south(v_{i,j})=null$), and 
(iv) $west(v_{i,j})=v_{i,j-1}$ (and if $j=1$, $west(v_{i,j})=null$).

Now, let $M_{i-1}$ and $G_{i-1}$ be a matrix in level $i-1$ and the graph constructed for it, and let $M_i$ be one of the two submatrices of $M_{i-1}$ in level $i$. We describe how $G_i$ is obtained from $G_{i-1}$. The vertices of $G_i$ correspond to the distances in $M_i$ (i.e., we remove from $V_{i-1}$ the vertices that are fixed w.r.t. $M_i$).

\begin{figure*}[!h]
	\begin{center}
		\includegraphics[width=0.5\textwidth]{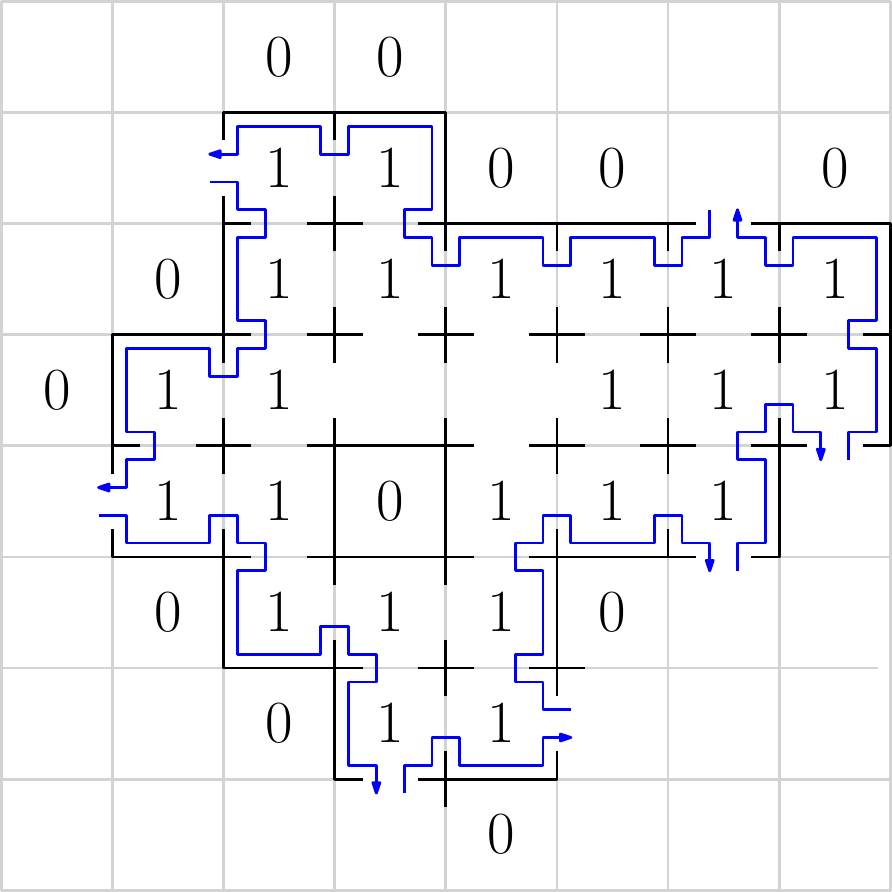}
	\end{center}
	\vspace{-15pt}
	\caption{\small Following the walls (right-hand rule) inside the large room created by a connected component of fixed 1-rooms. The missing doors are due to fixed 0-rooms. All the doors are to non-fixed rooms.}

	\label{fig:1-walls}
\end{figure*}

Returning to the maze analogy, we would like to replace the doors leading to fixed 0-rooms by walls (we already know they are locked), and to remove the doors between pairs of adjacent fixed 1-rooms (we already know they are open). By removing doors we get larger rooms with more than four outgoing doors, but for each way of entering such a large room, there is exactly one way to exit it, using the wall-follower rule. So each incoming edge can be replaced by an edge directly to the next non-fixed room. Notice that if the large room has holes in it, they can be ignored since the maze's exit is in the north-east corner and thus cannot be in the interior of a large room.

$G_i$ is constructed from $G_{i-1}$ as follows:
\begin{enumerate}
	\item Delete all edges $(u,v)$ such that $u$ is a non-fixed room and $v$ is a fixed 0-room.
	\item Let $G_{i-1}'$ be the graph induced by the fixed 1-rooms. For each connected component $C$ of $G_{i-1}'$:
	\begin{itemize}
		\item Let $v_{k,l}$ be any east-most room in $C$. Clearly, this room is adjacent to the outer boundary of $C$ (i.e., to the outer boundary of the union of the rooms in $C$).
		\item Starting from the eastern wall of room $v_{k,l}$, use the wall-follower rule to traverse the outer boundary of $C$. For each visited room $v$, let $next(v)$ be the first non-fixed room that would follow $v$ if we were walking in the graph $G_{i-1}$ (using the wall-follower rule); see Figure~\ref{fig:1-walls}.
	\end{itemize} 
	\item For each edge $(u,v)$ such that $u$ is a non-fixed room and $v$ is a fixed 1-room, replace $(u,v)$ by $(u,next(v))$.
	\item Finally, delete all the fixed rooms. 
\end{enumerate}


\paragraph{Running time.}
Computing $G_{i-1}'$ and its connected components can be done in $O(|G_{i-1}|)$ time using a DFS algorithm. For each connected component $C$, finding its outer boundary can be done in time linear in the size of $C$, using the wall-follower rule. Finding $next(v)$ for each visited fixed 1-room $v$ can be done by walking along the outer boundary in the opposite direction. All the other steps require time linear in the size of $G_{i-1}$. We conclude that $G_i$ can be constructed in $O(|G_{i-1}|)$ time.

\paragraph{Correctness.}
The following lemma is analogous to Lemma~\ref{lem:fixed-paths}.
\begin{lemma}\label{lem:fixed-paths2}
	Given a range $[s,t]$ corresponding to $M_i$, let $\Pi_{i-1}$ be the path traced by the decision algorithm in $G_{i-1}$.
	Let $v_k$ and $v_l$ be two vertices in $\Pi_{i-1}$, such that $v_k$ and $v_l$ are non-fixed w.r.t. $G_i$ but all the vertices between them are fixed w.r.t. $G_i$. Let $u$ be the successor of $v_k$ in $\Pi_{i-1}$. Then,
	\begin{itemize}
		\item if $u=north(v_k)$ in $G_{i-1}$, then $north(v_k)=v_l$ in $G_i$.
		\item if $u=east(v_k)$ in $G_{i-1}$, then $east(v_k)=v_l$ in $G_i$.
		\item if $u=south(v_k)$ in $G_{i-1}$, then $south(v_k)=v_l$ in $G_i$.
		\item if $u=west(v_k)$ in $G_{i-1}$, then $west(v_k)=v_l$ in $G_i$.
	\end{itemize}
\end{lemma}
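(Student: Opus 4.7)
The plan is to case-split on the direction by which the wall-follower moves from $v_k$ to $u$ in $\Pi_{i-1}$; by the rotational symmetry of the right-hand rule, all four cases are completely analogous, so it suffices to treat one, say $u = east(v_k)$.

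I would first argue that every vertex strictly between $v_k$ and $v_l$ on $\Pi_{i-1}$, and in particular $u$ itself, is a fixed 1-room. Indeed, these vertices are fixed by hypothesis, and the wall-follower never enters a non-valid room, whereas a fixed 0-room is non-valid for every range in $M_i$ (in particular for $[s,t]$). Moreover, all these rooms belong to a single connected component $C$ of $G_{i-1}'$ (the subgraph induced by the fixed 1-rooms), because consecutive vertices of $\Pi_{i-1}$ are grid-adjacent in $G_{i-1}$. Hence the subpath of $\Pi_{i-1}$ from $u$ until the step immediately preceding $v_l$ runs entirely inside the ``enlarged room'' formed by $C$, and it exits to $v_l$ through a door on the outer boundary of $C$ (no hole of $C$ can contain $v_l$, by the remark in the construction that holes may be ignored).

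The heart of the argument, and what I expect to be the main obstacle, is to show that $next(u) = v_l$, where $next(u)$ is the pointer set during the construction of $G_i$ by a wall-follower traversal of $C$'s outer boundary. The key claim I would establish is that the wall-follower trajectory inside $C$ starting at $u$ with the incoming direction dictated by $\Pi_{i-1}$ retraces a portion of that very boundary traversal, since both keep the right hand on the same fixed wall structure of $C$ (consisting of the fixed 0-rooms surrounding $C$ together with the grid boundary), and the outer boundary of $C$ is simply connected once its holes are ignored. Consequently, the ``first non-fixed room that would follow $u$'' assigned by the construction is precisely the first non-fixed room reached after $u$ on $\Pi_{i-1}$, namely $v_l$. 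Then by step~3 of the construction, the edge $(v_k,u)$ of $G_{i-1}$ (which is $east(v_k)$) is replaced by $(v_k, next(u)) = (v_k, v_l)$, giving $east(v_k) = v_l$ in $G_i$, as required.
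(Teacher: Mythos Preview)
The paper gives no proof of this lemma; it only remarks that the statement is ``analogous to Lemma~\ref{lem:fixed-paths}.'' Your proposal therefore goes beyond what the paper supplies, and its overall shape --- the intermediate vertices of $\Pi_{i-1}$ are fixed 1-rooms lying in a single connected component $C$ of $G_{i-1}'$, the relevant edge out of $v_k$ is redirected by step~3 of the construction to $next(u)$, and one must show $next(u)=v_l$ --- is precisely the natural line of argument.

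The one point that needs more care is your ``key claim'' that the wall-follower trajectory of $\Pi_{i-1}$ inside $C$ retraces the construction's boundary traversal because ``both keep the right hand on the same fixed wall structure.'' In the setting of Lemma~\ref{lem:fixed-paths} the greedy step at a fixed vertex depends only on that vertex's own (fixed) validity, so the walk through fixed vertices is determined independently of $[s,t]$ and the analogy is tight. In the weak setting, however, the wall-follower's local move at a room $r\in C$ depends on the validity of the \emph{neighbouring} rooms, and a neighbour of $r$ on the outer boundary of $C$ may well be non-fixed. For the particular range $[s,t]$ such a neighbour $w$ could be non-valid; then $\Pi_{i-1}$ treats the door to $w$ as a wall and continues inside $C$, whereas the construction's traversal records $w$ as an exit door. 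In that case the ``first non-fixed room that would follow $u$'' is $w$, giving $next(u)=w\ne v_l$ and hence $east(v_k)=w$ in $G_i$, contrary to the stated conclusion. Your sketch does not rule this scenario out, and neither does a straight transplant of the proof of Lemma~\ref{lem:fixed-paths}; closing this gap (or reformulating the statement so that the downstream correctness argument still goes through) requires additional care that both the paper and your proposal leave implicit.
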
  



\begin{lemma}
	For a given range $[s,t]$, the decision algorithm applied to $G_i$ returns ``yes'' if and only if the decision algorithm applied to $G'$ returns ``yes''.
\end{lemma}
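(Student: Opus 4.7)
The plan is to prove the lemma by induction on the recursion level $i$, using Lemma~\ref{lem:fixed-paths2} as the main tool, in close analogy with the proof of Lemma~\ref{lem:shortcuts-correctness} for the one-sided shortcuts variant. The base case $i=0$ follows from applying the construction argument once to $G'$ itself (equivalently, setting $G_{-1}:=G'$ and invoking the inductive step). For the inductive step, I will assume that the decision algorithm on $G_{i-1}$ agrees with the one on $G'$ for every range corresponding to $M_{i-1}$ (hence for every range corresponding to $M_i \subseteq M_{i-1}$), and then show that for every range $[s,t]$ corresponding to $M_i$, the decision algorithm on $G_i$ agrees with the one on $G_{i-1}$.

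The main structural claim I intend to establish is that the sequence of non-fixed rooms visited by the decision algorithm on $G_{i-1}$, restricted to those non-fixed w.r.t.\ $M_i$, coincides exactly with the sequence of rooms visited by the decision algorithm on $G_i$, and moreover that the wall-follower direction upon entering each such room is the same in both runs. To prove this, I will step along the path $\Pi_{i-1}$ traced by the right-hand rule in $G_{i-1}$ and simultaneously track the path $\Pi_i$ in $G_i$. At each non-fixed vertex $v_k$, the wall-follower tries to exit in some direction $d$ determined by its state. If the direction-$d$ neighbor of $v_k$ in $G_{i-1}$ is itself non-fixed, then the construction did not touch the $d$-edge out of $v_k$, so $\Pi_i$ follows the same edge. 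Otherwise, Lemma~\ref{lem:fixed-paths2} yields that the $d$-edge from $v_k$ in $G_i$ points directly to the next non-fixed vertex $v_l$ visited by $\Pi_{i-1}$ in that direction, so $\Pi_i$ advances correctly to $v_l$.

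The main obstacle will be establishing the directional consistency: when $\Pi_{i-1}$ re-emerges into a non-fixed vertex $v_l$ after sliding along the boundary of a connected component of fixed $1$-rooms (possibly interrupted by fixed $0$-rooms that act as walls), the incoming direction used by the right-hand rule must match the incoming direction induced in $G_i$ by the single redirected edge from $v_k$. This is precisely what the outer-boundary traversal used to define $next(\cdot)$ is engineered to reproduce, since it simulates the right-hand rule along the same boundary that $\Pi_{i-1}$ would follow. I will make this match explicit by case-analysis on the four possible entry directions into the component at $v_k$, appealing to Lemma~\ref{lem:fixed-paths2} in each case.

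Finally, to conclude I will argue termination: because $(a_1,b_1)$ and $(a_n,b_n)$ are kept in every $V_i$, $\Pi_i$ reaches $(a_n,b_n)$ (respectively, returns to $(a_1,b_1)$) if and only if $\Pi_{i-1}$ does, so the ``yes''/``no'' output is preserved. Chaining the inductive steps from $G_i$ back to $G_0$ and then to $G'$ yields the desired equivalence.
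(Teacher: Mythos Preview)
Your proposal is correct and follows essentially the same route as the paper: induction on the recursion level $i$, using Lemma~\ref{lem:fixed-paths2} to show that the path $\Pi_i$ obtained from $\Pi_{i-1}$ by deleting the fixed vertices is exactly the path traced by the wall-follower in $G_i$, and then reading off the ``yes''/``no'' answer from the terminal vertex ($v_{n,n}$ versus $v_{1,1}$). The paper's proof is terser---it dispatches the matching of successive non-fixed vertices with a single appeal to Lemma~\ref{lem:fixed-paths2} and the remark that ``the decision algorithm makes only local decisions depending on which direction it has currently chosen''---whereas you flag the directional-consistency issue (that the wall-follower's state at $v_l$ depends on its entry direction) as an obstacle deserving an explicit case analysis; this is a point the paper leaves implicit in the definition of $next(\cdot)$ via the boundary traversal, so your more careful treatment is justified but not a different argument.
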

\begin{proof}
	As in Lemma~\ref{lem:shortcuts-correctness}, the proof is by induction on $i$, the level of the recursion.
	Assume that the lemma is true for $G_{i-1}$. 	
	Let $\Pi_{i-1}$ be the path traced by the decision algorithm in $G_{i-1}$.
	Consider the path $\Pi_i$ that is obtained from $\Pi_{i-1}$ by removing all the vertices that are fixed w.r.t. $G_i$.
  We claim that the decision algorithm applied to $G_i$ follows the path $\Pi_i$.
	Indeed, for any two consecutive vertices $v_k$ and $v_{k+1}$ in $\Pi_i$, let $u$ be the successor of $v_k$ in $\Pi_{i-1}$.
	By Lemma~\ref{lem:fixed-paths2}, if $u=north(v_k)$ in $G_{i-1}$ then $north(v_k)=v_{k+1}$ in $G_i$, and the same holds for $east(v_k)$, $south(v_k)$ and $west(v_k)$. So $\Pi_i$ is a path in $G_i$ and moreover, since the decision algorithm makes only local decisions depending on which direction it has currently chosen, it will follow the same path in $G_i$.
	
	Now, let $v_q$ be the last vertex in $\Pi_{i-1}$.
	If the decision algorithm on $G_{i-1}$ returns ``yes'', then $v_q = v_{n,n}$ and the decision algorithm on $G_i$ returns ``yes'' as well.
	If, on the other hand, the decision algorithm on $G_{i-1}$ returns ``no'', then $v_q = v_{1,1}$ and the decision algorithm on $G_i$ will also return ``no''.
	
\old{
	Similarly to Lemma~\ref{lem:shortcuts-correctness}, the proof is by induction on the level of nesting of the graphs/matrices. 
	Let $G_{i-1}$ be the graph that was contracted in level $i-1$, such that $G_i$ is one of the two graphs that were constructed from $G_{i-1}$ by applying the above procedure. 
	By the induction hypothesis, the decision algorithm on $G_{i-1}$ returns ``yes'' if and only if the decision algorithm on $G$ returns ``yes''.
	
	Let $\Pi_{i-1}=(v_1,...,v_k)$ be the path that was found by the decision algorithm on the graph $G_{i-1}$, for the given range $[s,t]$.
	Denote by $v_{l_1},...,v_{l_p}$ the vertices of $\Pi_{i-1}$ that are fixed in $G_i$ ($l_1<...<l_j$). By the construction procedure we have that for any $l_j$ if $north(v_{l_j})=v_{{l_j}+1}$ in $G_{i-1}$ then $north(v_{l_j})=v_{l_{j+1}}$ on $G_i$, and the same holds for $east(v_{l_j})$, $south(v_{l_j})$ and $west(v_{l_j})$. That means $\Pi_{i}=(v_{l_1},...,v_{l_p})$ is a path in $G_i$. Moreover, since the validity of the vertices is determined only by the range $[s,t]$, the decision algorithm on $G_i$ will follow exactly the path $\Pi_{i}$.
	
	If the decision algorithm on $G_{i-1}$ returns ``yes'', then $v_k=v_{n,n}$, and the decision algorithm on $G_i$ returns ``yes'' as well.
	Else, if the decision algorithm on $G_{i-1}$ returns ``no'' then $v_k=v_{1,1}$, and since the decision algorithm on $G_i$ follows the path $\Pi_{i}=(v_{l_1},...,v_{l_p})$, it will also return ``no''. 
}
\end{proof}

\section{Missing proofs}\label{apx:missing_proofs}

\dFSDecision*
\begin{proof}
	
	It is easy to see that if the algorithm returns ``yes'' then $P$ is an s-path (in $G$) from $(a_{1},b_{1})$ to $(a_{n},b_{n})$ and hence $(a_{n},b_{n})$ is an s-reachable position in $G$.
	
	Assume now that $(a_{n},b_{n})$ is an s-reachable position in $G$. (Then, in particular, $f(a_1,b_1)=1$ and $f(a_n,b_n)=1$.) 
	We now prove that if a position $v=(a_{i},b_{j})$ is an s-reachable position in $G$, then there exists a position $v'=(a_{i'},b_{j}) \in P$, $i' \le i$, such that $f(v')=1$. In particular, since $(a_n,b_n)$ is an s-reachable position in $G$, there exists a position $(a_{i'},b_n) \in P$ such that $f(a_{i'},b_n)=1$, and when this position becomes the current position the algorithm returns ``yes''.
	
	We prove this claim by induction on $j$.  	
	The base case where $j=1$ is trivial, since $(a_1,b_1) \in P$.
	Let $\Pi$ be an s-path from $(a_1,b_1)$ to $v=(a_i,b_{j+1})$. Let $u=(a_k,b_j)$, $k \le i$, be a position in $\Pi$ such that $f(u)=1$. $u$ is an s-reachable position in $G$, so by the induction hypothesis there exists a vertex $v'=(a_{i'},b_j) \in P$, $i'\le k$, such that $f(a_{i'},b_j) = 1$. After adding $v'$ to $P$, the algorithm sets $curr\leftarrow(a_{i'},b_{j+1})$. If $f(curr)=1$, then we are done. Else $f(curr)=0$ and the algorithm increases $i'$ until $curr=(a_{i'+l},b_{j+1})$, for some $l \ge 1$, and $f(curr)=1$. Since $f(v)=1$, we conclude that $i'+l \le i$, and the claim follows.
\end{proof}

\totalSize*
\begin{proof}
	By induction on the level. The only matrix in level 0 is $M$, and $|M|=m+k$. Let $M'$ be a matrix in level $i-1$, and assume the size of $M'$ is $p+q$ (it has $p$ rows and $q$ columns). In level $i$ we perform a binary search in the middle row of $M'$ to find the smallest feasible range $[d^{\min}_\frac{p}{2},d^{\max}_j]$ in this row. It is easy to see that the resulting two submatrices are of sizes $\frac{p}{2}+q-j$ and $\frac{p}{2}+j-1$, respectively, which sums to $p+q-1$.
\end{proof}

\subsection{The construction of the graphs in Section \ref{sec:DFGS} - correctness proof}

We assume below that both the distances corresponding to $v_{1,1}$ and to $v_{n,n}$, respectively, are within the range, since otherwise the claim is clearly true.     

\begin{restatable}{lemma}{fixedPaths}
	\label{lem:fixed-paths}
	Given a range $[s,t]$ corresponding to $M_i$, let $\Pi_{i-1}$ be the path traced by the decision algorithm in $G_{i-1}$.
	Let $v_k$ and $v_l$ be two vertices in $\Pi_{i-1}$, such that $v_k$ and $v_l$ are non-fixed w.r.t. $G_i$ but all the vertices between them are fixed w.r.t. $G_i$. Then in $G_i$, $jumpB(v_k)=v_l$, if $f_{[s,t]}(v_k) = 1$, and $jumpA(v_k)=v_l$, if $f_{[s,t]}(v_k) = 0$.
\end{restatable}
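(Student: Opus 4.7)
The plan is to argue by induction over the reverse topological order in which the construction processes the vertices of $G_{i-1}$. The key observation to exploit is that, for any fixed vertex $v$ w.r.t.\ $G_i$, the value $f_{[s,t]}(v)$ is the same for every range $[s,t]$ corresponding to $M_i$: it equals $1$ if $v$ is a fixed valid vertex and $0$ if $v$ is a fixed non-valid vertex. Consequently, at every fixed vertex the decision algorithm in $G_{i-1}$ chooses its outgoing edge deterministically (namely, it uses $jumpB$ at fixed valid vertices and $jumpA$ at fixed non-valid ones), which matches exactly how the construction computes $next(\cdot)$.

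First I would establish the following sub-claim: for every fixed vertex $u$ occurring in $\Pi_{i-1}$, the pointer $next(u)$ produced by the construction is exactly the first non-fixed vertex that appears after $u$ along $\Pi_{i-1}$. I would prove this by induction over the position of $u$ in $\Pi_{i-1}$, processed from last to first (which matches the reverse topological traversal used by the construction, since edges of $G_{i-1}$ go forward in topological order). In the base case, the vertex immediately following $u$ in $\Pi_{i-1}$ is already non-fixed; by the deterministic choice above, this successor is $jumpB(u)$ (resp.\ $jumpA(u)$) in $G_{i-1}$ according to whether $u$ is valid or non-valid, and in both branches the code fragment assigns $next(u)$ to precisely that successor. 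In the inductive step, the successor $u'$ of $u$ in $\Pi_{i-1}$ is itself fixed, so by the induction hypothesis $next(u')$ equals the first non-fixed vertex past $u'$, which is the same as the first non-fixed vertex past $u$; the code fragment then sets $next(u) \leftarrow next(u')$, completing the induction.

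With the sub-claim in hand, the lemma follows immediately. Consider $v_k$ in $\Pi_{i-1}$ with $f_{[s,t]}(v_k)=1$: the decision algorithm in $G_{i-1}$ moves from $v_k$ to the vertex $u_0 = jumpB(v_k)$ in $G_{i-1}$. If $u_0$ is non-fixed then $u_0 = v_l$, and the construction leaves $jumpB(v_k)$ untouched, so $jumpB(v_k)=v_l$ in $G_i$; if $u_0$ is fixed, then the construction overwrites $jumpB(v_k)$ with $next(u_0)$, and by the sub-claim this equals $v_l$. The case $f_{[s,t]}(v_k)=0$ is symmetric, with $jumpA$ in place of $jumpB$.

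The only subtle point that the plan must address carefully is that the validity of the fixed vertices on the subpath really is the same for the decision algorithm in $G_{i-1}$ (run with a range from $M_i$) and for the construction of $G_i$ (which classified those vertices as fixed valid or fixed non-valid w.r.t.\ $M_i$); this is guaranteed by the definition of ``fixed'' and is the reason the deterministic walk encoded by $next(\cdot)$ coincides with the path $\Pi_{i-1}$ actually taken by the decision algorithm. I expect that once this correspondence is stated cleanly, the inductive argument is essentially a bookkeeping exercise.
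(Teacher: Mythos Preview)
Your proposal is correct and follows essentially the same approach as the paper: an induction (in reverse order along $\Pi_{i-1}$) showing that $next(u_j)=v_l$ for each fixed intermediate vertex $u_j$, using that the decision algorithm's choice at a fixed vertex coincides with the branch taken by the construction's code fragment, and then concluding by examining the update of $jumpB(v_k)$ or $jumpA(v_k)$ at the non-fixed endpoint $v_k$. Your explicit remark that $f_{[s,t]}$ agrees with the fixed valid/non-valid classification on fixed vertices is exactly the observation the paper relies on implicitly.
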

\begin{proof}
	First, observe that if $v_l$ immediately follows $v_k$ in $\Pi_{i-1}$, then the lemma is clearly true, so
	let $u_1,\ldots,u_p$ be the vertices between $v_k$ and $v_l$.
	We show by induction that for any $1 \le j \le p$, $next(u_j)=v_l$.
	For $j=p$, if $u_p$ is fixed valid, then by the decision algorithm $v_l \leftarrow jumpB(u_p)$ and by the code fragment $next(u_p)=jumpB(u_p)=v_l$.
	If, on the other hand, $u_p$ is fixed non-valid, then by the decision algorithm $v_l = jumpA(u_p)$ and by the code fragment $next(u_p) \leftarrow jumpA(u_p)=v_l$.
	For $j < p$, if $u_j$ is valid, then by the decision algorithm $u_{j+1}=jumpB(u_j)$ and by the code $next(u_j)\leftarrow next(jumpB(u_j))=next(u_{j+1})$,
	but by the induction hypothesis $next(u_{j+1})=v_l$, so we get $next(u_j)=v_l$.
	If, on the other hand, $u_j$ is non-valid, then by the decision algorithm $u_{j+1}=jumpA(u_j)$ and by the code and the induction hypothesis $next(u_j)\leftarrow next(jumpA(u_j))=next(u_{j+1})=v_l$. 
	
	Now, if $f_{[s,t]}(v_k) = 1$, then by the decision algorithm $u_1=jumpB(v_k)$ and by the code $jumpB(v_k) \leftarrow next(jumpB(v_k))=next(u_1)=v_l$, and,
	if $f_{[s,t]}(v_k) = 0$, then by the decision algorithm $u_1=jumpA(v_k)$ and by the code $jumpA(v_k) \leftarrow next(jumpA(v_k))=next(u_1)=v_l$.
\end{proof}

Notice that Lemma~\ref{lem:fixed-paths} remains true when $v_k=v_{1,1}$, even if $v_{1,1}$ is fixed w.r.t. $G_i$, and when $v_l = v_{n,n}$, even if $v_{n,n}$ is fixed w.r.t. $G_i$.   

\shortcutsCorrectness*
\begin{proof}
	By induction on $i$, the level of the recursion.
	We omit the proof for $G_0$ (i.e., that the decision on $G_0$ is the same as the one on $G$), since it is essentially identical to the proof of the general case.
	We thus assume that the lemma is true for $G_{i-1}$ (i.e., that the decision on $G_{i-1}$ is the same as the one on $G$), and prove that it is also true for $G_i$. 
	
	Let $\Pi_{i-1}$ be the path traced by the decision algorithm in $G_{i-1}$.
	Consider the path $\Pi_i$ that is obtained from $\Pi_{i-1}$ by removing all the vertices that are fixed w.r.t. $G_i$ (except for $v_{1,1,}$ and $v_{n,n}$, even if they are fixed w.r.t. $G_i$). We claim that the decision algorithm applied to $G_i$ follows the path $\Pi_i$.
	Indeed, by Lemma~\ref{lem:fixed-paths}, for any two consecutive vertices $v_k$ and $v_{k+1}$ in $\Pi_i$, if $f_{[s,t]}(v_k) = 1$ then $jumpB(v_k)=v_{k+1}$ and if $f_{[s,t]}(v_k) = 0$ then $jumpA(v_k)=v_{k+1}$, so $\Pi_i$ is a path in $G_i$ and moreover it is followed by the decision algorithm in $G_i$.
	
	Now, let $v_q$ be the last vertex in $\Pi_{i-1}$.
	If the decision algorithm on $G_{i-1}$ returns ``yes'', then $v_q = v_{n,n}$ and the decision algorithm on $G_i$ returns ``yes'' as well.
	If, on the other hand, the decision algorithm on $G_{i-1}$ returns ``no'', then $v_q \ne v_{n,n}$ and it holds that either $f_{[s,t]}(v_q)=1$ and $jumpB(v_q)=null$ or $f_{[s,t]}(v_q)=0$ and $jumpA(v_q)=null$. If $v_q$ is non-fixed w.r.t. $G_i$, then by the code fragment $jumpB(v_q)$ and $jumpA(v_q)$ do not change and the decision algorithm on $G_i$ returns ``no'' in this case.
	If $v_q$ is fixed w.r.t. $G_i$, then by the code fragment $next(v_q)=null$. Let $v_l$ denote the last vertex in $G_{i-1}$ that is non-fixed w.r.t. $G_i$ (i.e., $v_l$ is the last vertex in $\Pi_i$). Arguing as in the proof of Lemma~\ref{lem:fixed-paths}, we get that (in $G_i$) if $f_{[s,t]}(v_l) = 1$ then $jumpB(v_l)=null$ and if $f_{[s,t]}(v_l) = 0$ then $jumpA(v_l)=null$, thus the decision algorithm on $G_i$ returns ``no'' also in this case.  
\end{proof}

\end{document}